\def\ttt{\texttt}
\def\Zee{{\mathbb{Z}}}
\DeclareMathOperator{\alp}{alph}
\DeclareMathOperator{\append}{append}
\DeclareMathOperator{\depth}{depth}
\DeclareMathOperator{\level}{level}
\DeclareMathOperator{\rank}{rank}
\DeclareMathOperator{\short}{short}
\newtheorem{theorem}{Theorem}
\newtheorem{lemma}{Lemma}
\newtheorem{proposition}{Proposition}
\theoremstyle{definition} 
\newtheorem{example}{Example}
\title{A Characterization of Morphic Words with Polynomial Growth}
\author{Tim Smith\affiliationmark{1}}
\affiliation{School of Computer Science, University of Waterloo, Canada}
\keywords{morphic word, polynomial growth, zigzag word, multilinear word}
\begin{document}

\maketitle

\begin{abstract}
A morphic word is obtained by iterating a morphism to generate an infinite word, and then applying a coding.  We characterize morphic words with polynomial growth in terms of a new type of infinite word called a \textit{zigzag word}.  A zigzag word is represented by an initial string, followed by a finite list of terms, each of which repeats for each $n \geq 1$ in one of three ways: it grows forward [$t(1)\ t(2)\ \dotsm\ t(n)]$, backward [$t(n)\ \dotsm\ t(2)\ t(1)$], or just occurs once [$t$].  Each term can recursively contain subterms with their own forward and backward repetitions.  We show that an infinite word is morphic with growth $\Theta(n^k)$ iff it is a zigzag word of depth $k$.  As corollaries, we obtain that the morphic words with growth $O(n)$ are exactly the ultimately periodic words, and the morphic words with growth $O(n^2)$ are exactly the multilinear words.
\end{abstract}

\section{Introduction}

Morphic words \cite{as} are a well-studied class of infinite words obtained by iterating a morphism $h$ on a letter $c$, and then applying a coding (letter-to-letter mapping) $\tau$.  The morphism $h$ is required to be prolongable on $c$, meaning that $h(c) = cx$ for some string $x$ such that $h^i(x)$ is nonempty for all $i$.  Such a word has the form
\begin{align*}
\tau(h^\omega(c)) = \tau(c\ x\ h(x)\ h^2(x)\ h^3(x)\ \dotsm)
\end{align*}
We call the triple $(h,c,\tau)$ a ``representation'' of the morphic word $\alpha = \tau(h^\omega(c))$.  In this paper we characterize morphic words with polynomial growth, i.e. those having a representation $(h,c,\tau)$ whose growth function $f(n) = |h^n(c)| = |\tau(h^n(c))|$ is bounded by a polynomial in $n$.

Our characterization involves a new type of infinite word which we call a \textit{zigzag word}.  A zigzag word is represented by an initial string (which may be empty), followed by a finite list of terms.  Each term repeats for each $n \geq 1$ according to one of three functions, denoted $F$, $B$, and $S$.  The function $F$ causes forward growth, $B$ causes backward growth, and $S$ causes stasis (no growth).  Each $F$ or $B$ term can recursively contain subterms with their own instances of the three functions, while each $S$ term contains only a string.  For example, the list
\begin{align*}
l = [(B,[(S,\ttt{a}),(F,[(S,\ttt{b})])])]
\end{align*}
represents the zigzag word
\begin{align*}
\alpha &= \prod_{n \geq 1} \prod_1^{i=n} ( \ttt{a} \prod_{j=1}^n \ttt{b} ) \\
			 &= \ttt{ab abb ab abbb abb ab abbbb abbb abb ab }\dotsm
\end{align*}
Here, $\prod$ denotes concatenation, and we use $\prod_{i=1}^n f(i)$ to mean $f(1) f(2) \dotsm f(n)$, and $\prod_1^{i=n} f(i)$ to mean $f(n) f(n{-}1) \dotsm f(1)$.  Thus, the outermost $\prod$ corresponds to the increasing bound $n$ which is always present, the middle $\prod$ corresponds to the $B$ and indicates backwards growth from $n$ down to 1, and the innermost $\prod$ corresponds to the $F$ and indicates forwards growth from 1 up to $n$.  The net effect is that the number of \ttt{a}s in each block remains stationary at 1, while the number of \ttt{b}s ``zigzags'' with the pattern 1, 2, 1, 3, 2, 1, 4, 3, 2, 1, $\dotsc$  A formal definition of zigzag words appears in Section \ref{sec:zigzag}, along with the notion of the depth of a zigzag representation as the level of nesting of its terms.

As our main result, we show that an infinite word has a morphic representation with growth $\Theta(n^k)$ iff it has a zigzag representation of depth $k$.  Our proof makes use of the notion of ``rank'' due to Ehrenfeucht and Rozenberg \cite{er79}.  Notice that whereas a morphic representation makes use of a coding, a zigzag representation does not: the letters in the zigzag representation are just those that appear in the word.

We use our result to make connections with two other classes of infinite words: ultimately periodic words and multilinear words.  Ultimately periodic words have the form $qrrr\dotsm$ for strings $q,r$, while multilinear words \cite{smith3,ehk} have the form
\begin{align*}
q\prod\limits_{n\geq 0}r_1^{a_1n+b_1}r_2^{a_2n+b_2} \dotsm r_m^{a_mn+b_m}
\end{align*}
where $\prod$ denotes concatenation, $q,r_i$ are strings, and $a_i,b_i$ are nonnegative integers.  For example, $\prod\limits_{n\geq 0}\texttt{a}^{n+1}\texttt{b}$ = $\texttt{abaabaaab}\dotsm$ is a multilinear word.  We show that the morphic words with growth $O(n)$ are exactly the ultimately periodic words, and that the morphic words with growth $O(n^2)$ are exactly the multilinear words.  Thus, ultimately periodic words and multilinear words can be seen as the first two levels of the hierarchy of morphic words with polynomial growth, or equivalently, of the hierarchy of zigzag words.

\subsection{Related work}

A previous characterization of morphisms with polynomial growth is due to Ehrenfeucht and Rozenberg \cite{er79}, who introduce the notion of the ``rank'' of letters under a morphism.  They show that a string $x$ has rank $k$ under a morphism $h$ iff $k$ is the minimal degree of a polynomial $p$ such that for every $n$, $p(n) \geq |h^n(x)|$.  We make use of this result in proving the equivalence between morphic words with polynomial growth and zigzag words.

Other topics studied in connection with morphisms with polynomial growth include questions of sequence equivalence \cite{karhumaki1977,honkala2003a}, $\omega$-equivalence \cite{honkala2003b}, the relationship between these types of equivalence \cite{honkala2002}, language equivalence \cite{honkala2004}, length sets \cite{ekr1978}, codes \cite{hw1985}, and boundedness \cite{hl1987}.

The class of multilinear words appears in \cite{smith3} as the infinite words determined by one-way stack automata, and also in \cite{ehk} (as the reducts of the ``prime'' stream $\mathrm{\Pi}$).  In \cite{smith2016b}, prediction of periodic words and multilinear words is studied in an automata-theoretic setting.

\subsection{Outline of the paper}

The rest of the paper is organized as follows.  Section \ref{sec:preliminaries} gives preliminary definitions.  Section \ref{sec:zigzag} defines zigzag words and gives examples.  Section \ref{sec:results} proves our main result, the correspondence between zigzag words and morphic words with polynomial growth.  Section \ref{sec:applications} applies this correspondence in connection with ultimately periodic and multilinear words.  Section \ref{sec:conclusion} gives our conclusions.

\section{Preliminaries}\label{sec:preliminaries}

We denote the positive integers by $\Zee^+$.  When $X$ is a set, we denote the cardinality of $X$ by $|X|$.  We use square brackets to denote a list: e.g. $[x_1, \dotsc, x_m]$ is the list containing the elements $x_1, \dotsc, x_m$ in that order.  The number of elements in a list $v$ is denoted by $|v|$.  For lists $v_1, \dotsc, v_n$, we let $\append(v_1, \dotsc, v_n)$ denote the list of length $|v_1| + \dotsb + |v_n|$ consisting of the elements of $v_1$, followed by the elements of $v_2$, $\dotsc$, followed by the elements of $v_n$.

An \textbf{alphabet} $A$ is a finite set of letters.  A \textbf{word} is a concatenation of letters from $A$.  We denote the set of finite words by $A^*$ and the set of infinite words by $A^\omega$.  We call finite words \textbf{strings}.  The length of a string $x$ is denoted by $|x|$.  We denote the empty string by $\lambda$.  We write $A^+$ to mean $A^* - \{\lambda\}$.  A \textbf{language} is a subset of $A^*$.  For a finite or infinite word $S$, a \textbf{prefix} of $S$ is a string $x$ such that $S = x S'$ for some word $S'$.  The $i$th letter of $S$ is denoted by $S[i]$; indexing starts at $1$.  We denote the set of letters occurring in $S$ by $\alp(S)$.

\paragraph{Periodic and multilinear words}

For a nonempty string $x$, $x^\omega$ denotes the infinite word $xxx\dotsm$.  Such a word is called \textbf{purely periodic}.  An infinite word of the form $xy^\omega$, where $x$ and $y$ are strings and $y \neq \lambda$, is called \textbf{ultimately periodic}.  An infinite word is \textbf{multilinear} if it has the form
\begin{align*}
q\prod\limits_{n\geq 0}r_1^{a_1n+b_1}r_2^{a_2n+b_2} \dotsm r_m^{a_mn+b_m}
\end{align*}
where $\prod$ denotes concatenation, $q$ is a string, $m$ is a positive integer, and for each $1 \leq i \leq m$, $r_i$ is a nonempty string and $a_i$ and $b_i$ are nonnegative integers such that $a_i + b_i > 0$.  For example, $\prod\limits_{n\geq 0}\texttt{a}^{n+1}\texttt{b}$ = $\texttt{abaabaaab}\dotsm$ is a multilinear word.  Clearly the multilinear words properly include the ultimately periodic words.  Any multilinear word that is not ultimately periodic we call \textbf{properly multilinear}.

\paragraph{Morphic words}

A \textbf{morphism} on an alphabet $A$ is a map $h$ from $A^*$ to $A^*$ such that for all $x,y \in A^*$, $h(xy) = h(x)h(y)$.  Notice that $h(\lambda) = \lambda$.  The morphism $h$ is a \textbf{coding} if for all $c \in A$, $|h(c)| = 1$.  For $x \in A^*$, we let $L(x,h)$ denote the set of strings $\{h^i(x) \mid i \geq 0\}$.  The letter $c$ is \textbf{recursive} (for $h$) if for some $i \geq 1$, $h^i(c)$ contains $c$.  A string $x \in A^*$ is \textbf{mortal} (for $h$) if there is an $m \geq 0$ such that $h^m(x) = \lambda$.  The morphism $h$ is \textbf{prolongable} on a letter $c$ if $h(c) = cx$ for some $x \in A^*$, and $x$ is not mortal.  If $h$ is prolongable on $c$, $h^\omega(c)$ denotes the infinite word $c\ x\ h(x)\ h^2(x)\ \dotsm$.  We call such an infinite word \textbf{pure morphic}.  An infinite word $\alpha$ is \textbf{morphic} if there is a morphism $h$, coding $\tau$, and letter $c$ such that $h$ is prolongable on $c$ and $\alpha = \tau(h^\omega(c))$.  For example, let
\[
\begin{matrix*}[l]
h(\texttt{c}) = \texttt{cbaa} & \ \tau(\texttt{c}) = \texttt{a} \\
h(\texttt{a}) = \texttt{aa} & \ \tau(\texttt{a}) = \texttt{a} \\
h(\texttt{b}) = \texttt{b} & \ \tau(\texttt{b}) = \texttt{b}
\end{matrix*}
\]
Then $\tau(h^\omega(\texttt{c})) = \texttt{a}^1\texttt{ba}^2\texttt{ba}^4\texttt{ba}^8\texttt{ba}^{16}\texttt{b}\dotsm$ is a morphic word.  See Allouche and Shallit \cite{as} for more on morphic words.

A morphism $h$ has \textbf{growth} $f(n)$ on a string $x$ if $|h^n(x)| = f(n)$ for all $n \geq 0$.  We say that $h$ is \textbf{polynomially bounded} on $x$ if $|h^n(x)|$ is in $O(n^k)$ for some $k \geq 0$.  The following proposition says that if $h$ is polynomially bounded on $x$ (and $x$ is not mortal), its growth on $x$ must be in $\Theta(n^k)$ for some $k \geq 0$, and so cannot be an ``in-between'' function like $n \log n$.

\begin{proposition}\label{prop1}For every morphism $h$ and string $x$, if $h$ is polynomially bounded on $x$ and $x$ is not mortal under $h$, then $|h^n(x)|$ is in $\Theta(n^k)$ for some $k \geq 0$.\end{proposition}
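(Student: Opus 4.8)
The plan is to analyze the growth of $h$ on $x$ by decomposing $x$ into its constituent letters and understanding how each letter's growth behaves, using the rank machinery of Ehrenfeucht and Rozenberg cited in the excerpt. First I would observe that $|h^n(x)| = \sum_{a \in \alp(x)} (\text{number of occurrences of } a \text{ in } x)\cdot|h^n(a)|$, so it suffices to control $|h^n(a)|$ for each individual letter $a$; the growth of $x$ is, up to a constant factor, the maximum of the growths of the letters actually occurring in it (those with non-mortal behavior dominate). Since $x$ is not mortal, at least one such letter has unbounded iterates, so the max is a genuine growth function and not eventually zero.

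Next I would invoke the rank result: each letter $a$ has a rank $k_a$, and by the Ehrenfeucht–Rozenberg characterization, $k_a$ is the minimal degree of a polynomial dominating $|h^n(a)|$; in particular $|h^n(a)| = O(n^{k_a})$. Setting $k = \max_a k_a$ over $a \in \alp(x)$ gives $|h^n(x)| = O(n^k)$, which is consistent with the polynomial boundedness hypothesis (and one could alternatively take $k$ to be any exponent witnessing the hypothesized bound). The real content is the lower bound: I need to show $|h^n(x)| = \Omega(n^k)$ for this same $k$. For this I would argue that if the letter $a$ realizing the maximum rank $k$ satisfies $|h^n(a)| = O(n^k)$ with $k$ minimal, then in fact $|h^n(a)| = \Omega(n^k)$ — i.e., ranks cannot be ``skipped.'' This is where I expect the main obstacle to lie.

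To establish the $\Omega(n^k)$ lower bound, the key structural fact is that a recursive letter of rank $k$ must, upon iteration, eventually produce (inside $h^i(a)$ for some $i$) another recursive letter of rank exactly $k-1$; iterating this observation builds a chain of recursive letters of ranks $k, k-1, \dots, 1, 0$ nested inside the images. A recursive letter of rank $0$ simply reproduces itself a bounded number of times, contributing a constant; a recursive letter of rank $1$ reproduces itself plus a rank-$0$ letter each step, giving linear growth; and inductively, a recursive letter of rank $j$ contains a copy of itself plus something of rank $j-1$, so its count grows like $\sum_{i<n}(\text{rank }j{-}1\text{ growth})$, which telescopes from $\Theta(n^{j-1})$ up to $\Theta(n^{j})$. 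Made precise, this shows $|h^n(a)| = \Theta(n^{k})$, hence $|h^n(x)| = \Theta(n^k)$.

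The main obstacle is making the ``rank drops by exactly one'' step rigorous: one must rule out the possibility that a rank-$k$ letter only ever generates letters of rank $\le k-2$ (which would make its own growth $O(n^{k-1})$, contradicting minimality of its rank) while still carefully tracking which occurrences are recursive versus mortal, and ensuring the finitely many alphabet letters force the chain to stabilize into a genuine cycle. I would handle this by defining, for each recursive letter, the maximum rank among recursive letters appearing in its iterated images, showing this quantity equals the letter's own rank minus one (using the Ehrenfeucht–Rozenberg minimality), and then setting up a clean induction on $k$ with the base cases $k=0$ (bounded, hence $\Theta(n^0)$) and $k=1$ (linear) checked directly.
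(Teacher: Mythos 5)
Your claim is true and your overall strategy could be pushed through, but it is doing far more work than the paper does, and the part you add is exactly where your sketch stops being a proof. The paper's argument is two lines: take the least $k$ with $|h^n(x)| \in O(n^k)$, invoke Theorem 3 of \cite{er79} to conclude that $x$ has rank $k$, and then invoke Corollary 1 of \cite{er79}, which states precisely that such a string has growth $\Theta(n^k)$. You use the first of these results in the same way (to identify $k$ with the maximal rank of a letter of $x$ and get the $O(n^k)$ bound), but the entire second half of your proposal undertakes to re-derive the lower bound, i.e.\ to re-prove Ehrenfeucht and Rozenberg's Corollary 1 from scratch. The shortest correct completion of your argument is simply to cite that corollary, as the paper does.

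Judged as a standalone argument, the lower-bound half has genuine gaps, which you partly acknowledge but do not close. First, the letter of maximal rank $k$ in $x$ need not be recursive, so before your chain of recursive letters can even start you must prove that a recursive letter of rank exactly $k$ occurs in some $h^i(x)$: this requires showing that rank-$k$ letters keep occurring in $h^n(x)$ for all $n$ (otherwise the minimal dominating degree would drop below $k$, contradicting the rank), and then a pigeonhole argument on ancestor chains of rank-$k$ occurrences to extract a cycle. Second, the ``rank drops by exactly one'' step for a recursive letter $c$ with $h(c)=sct$ is itself a nontrivial lemma (it is essentially Lemma \ref{lemma-monorec} of this paper, proved later for a different purpose), and when $c$ is recursive only through a higher power, so that $c$ occurs in $h^p(c)$ with $p>1$, you additionally need that rank is unchanged when $h$ is replaced by $h^p$. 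Third, your induction is phrased for recursive letters, but the side string $st$ whose growth you sum is an arbitrary non-mortal string of rank $k-1$, so the induction hypothesis must be stated for strings, with the reachability-of-a-recursive-letter argument folded into each step (and with the non-mortality of $st$ in the case $k=1$ argued separately). None of these assertions is false, but they constitute the actual content of the statement you are proving, and in your proposal they are only gestured at.
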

\begin{proof}
Take any morphism $h$ and string $x$ such that $h$ is polynomially bounded on $x$ and $x$ is not mortal under $h$.  Take the lowest $k$ such that $|h^n(x)|$ is in $O(n^k)$.  By \cite[Theorem 3]{er79}, $x$ has rank $k$ under $h$.  Then by \cite[Corollary 1]{er79}, we have that $|h^n(x)|$ is in $\Theta(n^k)$.
\end{proof}

We say that a morphic word $\alpha$ has growth $f(n)$ if for some morphism $h$, coding $\tau$, and letter $c$, $h$ is prolongable on $c$, $\alpha = \tau(h^\omega(c))$, and $h$ has growth $f(n)$ on $c$.  We say that $\alpha$ has polynomial growth if it has growth $\Theta(n^k)$ for some $k$.

Note that a morphic word with polynomial growth may have alternative representations in which growth is exponential.  For example, $\ttt{a}^\omega$ has polynomial growth (take $c = \ttt{s}$, $h(\ttt{s}) = \ttt{sa}$, $h(\ttt{a}) = \ttt{a}$, $\tau(\ttt{s}) = \ttt{a}$, $\tau(\ttt{a}) = \ttt{a}$), notwithstanding the existence of exponential representations (e.g. $c = \ttt{a}$, $h(\ttt{a}) = \ttt{aa}$, $\tau(\ttt{a}) = \ttt{a}$).  In conjunction with Proposition 1 above, \cite[Theorem 25]{dk2009} shows that for every aperiodic pure morphic word $\alpha$, either (1) every representation $(h,c)$ of $\alpha$ has exponential growth, or (2) for some $k \geq 1$, every representation $(h,c)$ of $\alpha$ has growth $\Theta(n^k)$.  We do not know whether the same holds for (not necessarily pure) morphic words and representations $(h,c,\tau)$.

\section{Zigzag words}\label{sec:zigzag}

We now introduce zigzag words, define a notion of depth for these words, and give some examples.  Below, $\prod$ denotes concatenation, and we use $\prod_{i=1}^n f(i)$ to mean $f(1) f(2) \dotsm f(n)$, and $\prod_1^{i=n} f(i)$ to mean $f(n) f(n{-}1) \dotsm f(1)$.  In this section and all following ones, let $A$ be an alphabet.  Let $F$, $B$, and $S$ be functions, to be defined below.

Let $L$ be the set of all nonempty lists over $(\{F,B\} \times L) \cup (\{S\} \times A^+)$.  That is, $L$ consists of all lists of the form
\[
	[(f_1,x_1), \dotsc, (f_m,x_m)]
\]
with $m \geq 1$, such that for each $1 \leq i \leq m$,
\begin{align*}
	&f_i \text{ is in } \{F,B\} \text{ and } x_i \text{ is in } L, \text{ or} \\
	&f_i = S \text{ and } x_i \text{ is in } A^+.
\end{align*}
Define $R : L \times \Zee^+ \rightarrow A^+$ as follows:
\begin{align*}
	R([(f_1,x_1),\dotsc,(f_m,x_m)], i) = \prod_{j=1}^m f_j(x_j,i)
\end{align*}
Define $F : L \times \Zee^+ \rightarrow A^+$ as follows:
\begin{align*}
	F(l,n) = \prod_{i=1}^n 	R(l,i)
\end{align*}
Define $B : L \times \Zee^+ \rightarrow A^+$ as follows:
\begin{align*}
	B(l,n) = \prod_1^{i=n} 	R(l,i)
\end{align*}
Define $S : A^+ \times \Zee^+ \rightarrow A^+$ as follows:
\begin{align*}
	S(r, n) = r
\end{align*}
A \textbf{zigzag word} is an infinite word $\alpha$ such that for some $q \in A^*$ and $l \in L$,
\begin{align*}
		\alpha = q \prod_{i \geq 1} R(l,i)
\end{align*}							

\subsection{Depth of a zigzag word}

For a list $l = [(f_1,x_1), \dotsc, (f_m,x_m)] \in L$, we define
\begin{align*}
	\depth(l) = \max \{\depth(i) \mid 1 \leq i \leq m\}
\end{align*}
where in the context of $l$, for each $1 \leq i \leq m$, we define
\begin{align*}
	\depth(i) =
		\begin{cases}
			1 &\text{ if } f_i = S \\
			\depth(x_i)+1 &\text{ if } f_i = F \text{ or } B
		\end{cases}
\end{align*}
A zigzag word $\alpha$ has depth $k$ if for some $q \in A^*$ and $l \in L$, 
\begin{align*}
		\alpha = q \prod_{i \geq 1} R(l,i)
\end{align*}
and $\depth(l) = k$.

\subsection{Examples of zigzag words}

Below we give examples of zigzag words $\alpha$ of the form $q \prod_{i \geq 1} R(l,i)$ for various values of $q$ and $l$.  For each word, we also give a shorthand notation of the form $q:r$, where $r$ is a string obtained as follows.  For clarity, we enclose string literals in quotes in the following definitions.  Define $r = \short(l)$, where
\begin{align*}
 \short([(f_1,x_1), \dotsc, (f_m,x_m)]) = \prod_{1 \leq i \leq m} \short(f_i,x_i)
\end{align*}
and where
\begin{align*}
	\short(f_i,x_i) =
		\begin{cases}
			x_i &\text{ if } f_i = S \\
			\text{``('' }\short(x_i)\text{ ``)''} &\text{ if $f_i$ is in $\{F,B\}$ and $\depth(x_i) = 1$} \\
			\text{``F('' }\short(x_i)\text{ ``)''} &\text{ if $f_i = F$ and $\depth(x_i) > 1$} \\
			\text{``B('' }\short(x_i)\text{ ``)''} &\text{ if $f_i = B$ and $\depth(x_i) > 1$}
		\end{cases}
\end{align*}
We write the shorthand $q : r$ as just $r$ if $q = \lambda$.

\begin{example} (depth 1, ultimately periodic)
\begin{align*}
	q = \ttt{a}, l = [(S,\ttt{bc})]  \\
		\alpha  =  \ttt{a}(\ttt{bc})^\omega  =  \ttt{abcbcbc}\dotsm \\
 \text{shorthand:\ \ } \ttt{a}:\ttt{bc}
\end{align*}
\end{example}

\begin{example} (depth 2, multilinear)
\begin{align*}
	q = \lambda, l = [(S,\ttt{a}),(F,[(S,\ttt{b})])]  \\
		\alpha = \prod_{n \geq 1} \ttt{ab}^n = \ttt{ababbabbb}\dotsm \\
\text{shorthand:\ \ } \ttt{a}(\ttt{b})
\end{align*}
\end{example}

\begin{example} (depth 3)
\begin{align*}
	q = \lambda, l = [(F,[(S,\ttt{a}),(F,[(S,\ttt{b})])]), (B,[(S,\ttt{c}),(F,[(S,\ttt{d})])])] \\
		\alpha = \prod_{n \geq 1} (\prod_{i=1}^n \ttt{ab}^i) (\prod_1^{i=n} \ttt{cd}^i) \\
 		  = 	\ttt{abcd ababbcddcd ababbabbbcdddcddcd}\ \dotsm \\
\text{shorthand:\ \ } \text{F}(\ttt{a}(\ttt{b}))\ \text{B}(\ttt{c}(\ttt{d}))
\end{align*}
\end{example}

\section{Equivalence of morphic words with polynomial growth and zigzag words}\label{sec:results}

In this section we establish that an infinite word is morphic with growth $\Theta(n^k)$ iff it is a zigzag word of depth $k$ (Theorem \ref{theorem-main}).  We first show that every zigzag word of depth $k$ is a morphic word with growth $\Theta(n^k)$ (Theorem \ref{zigzag-morphic}), and then that every morphic word with growth $\Theta(n^k)$ is a zigzag word of depth $k$ (Theorem \ref{morphic-zigzag}).

\subsection{From zigzag words to morphic words}

\begin{lemma}\label{lemma1}For every $l \in L$, there is a string $w$, morphism $h$, and coding $\tau$ such that for all $n \geq 0$, $\tau(h^n(w)) = R(l,n+1)$, and $h$ has growth $\Theta(n^{\depth(l)-1})$ on $w$.\end{lemma}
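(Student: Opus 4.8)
The plan is to proceed by induction on $\depth(l)$, building the triple $(w,h,\tau)$ compositionally from the structure of the list $l$. The base case is $\depth(l) = 1$: here every pair in $l = [(S,x_1),\dots,(S,x_m)]$ uses the function $S$, so $R(l,i) = x_1 x_2 \dotsm x_m$ is a fixed string $r$ independent of $i$. I would take $w$ to be a fresh letter, $h(w) = w$ (so $h^n(w) = w$ for all $n$), and $\tau(w) = r$ — except that a coding must send a letter to a single letter, so instead I introduce $|r|$ fresh letters encoding the positions of $r$, set $h$ to fix each of them, and let $\tau$ decode. Then $\tau(h^n(w)) = r = R(l,n+1)$ and the growth is constant, i.e. $\Theta(n^0) = \Theta(n^{\depth(l)-1})$.

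For the inductive step, suppose the claim holds for all lists of depth less than $d$, and let $l = [(f_1,x_1),\dots,(f_m,x_m)]$ have depth $d \geq 2$. I would handle each pair $(f_j,x_j)$ separately and then concatenate the resulting machinery (using disjoint alphabets for the pieces and $\append$ to combine, as the paper's notation suggests). For a pair with $f_j = S$, the contribution to $R(l,i)$ is the constant string $x_j$, handled as in the base case. For a pair with $f_j \in \{F,B\}$, the induction hypothesis applied to $x_j$ yields a triple $(w_j,h_j,\tau_j)$ with $\tau_j(h_j^n(w_j)) = R(x_j,n+1)$ and growth $\Theta(n^{\depth(x_j)-1})$ on $w_j$. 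I now need a morphism that, on its starting letter, iterates to produce $R(x_j,1)\,R(x_j,2)\,\dotsm\,R(x_j,i)$ at stage $i$ (for $f_j = F$) or the reverse order (for $f_j = B$). The standard trick is to introduce a new "driver" letter $s_j$ with $h(s_j) = s_j\, w_j^{(\text{copy at level }0)}$ — more precisely, at each iteration the driver spawns a fresh copy of $w_j$ that has been "aged" appropriately. Concretely, I would keep a sequence of shifted copies of $w_j$'s alphabet so that the copy spawned at step $i$ has undergone exactly the right number of applications of (a renamed) $h_j$ by the time we read the word, and arrange via $\tau$ that it decodes to $R(x_j,i)$. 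For $F$ the driver appends the new copy on the right ($h(s_j) = s_j\,w_j'$); for $B$ it prepends on the left ($h(s_j) = w_j'\,s_j$), which exactly reverses the order in which the blocks appear. Checking $|h^n(s_j)| = \sum_{i=1}^{n} |R(x_j,i)| = \sum_{i=1}^n \Theta(i^{\depth(x_j)-1}) = \Theta(n^{\depth(x_j)}) = \Theta(n^{\depth(i)-1})$ for that index, and taking the maximum over $j$, gives the overall growth $\Theta(n^{d-1})$.

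The main obstacle is the bookkeeping that makes the spawned copies "age correctly": a copy of $w_j$ created at iteration $i$ must have been hit by $h_j$ exactly $n - i$ times after $n$ total iterations, so that it currently equals $h_j^{n-i}(w_j)$, whose $\tau_j$-image is $R(x_j, n-i+1)$ — but we want the $i$th block from the left (in the $n$th super-block) to be $R(x_j,i)$, not $R(x_j,n-i+1)$. Reconciling the direction of aging with the direction of block growth is the delicate point, and it is exactly here that the distinction between $F$ and $B$ does real work: one of them matches the natural aging order and the other requires the reversal that prepending achieves. I would resolve this by being careful about whether the driver re-spawns a copy at every step or "freezes" existing copies, possibly using an intermediate fixed letter that releases a copy only once; the cleanest implementation is to have $h$ act on the driver as $s_j \mapsto s_j\,\overline{w_j}$ where $\overline{w_j}$ is a one-step-delayed encoding, and let a relabeled $h_j$ act on already-released copies, so that after $n$ steps the released copies are $h_j^{0}(w_j), h_j^{1}(w_j), \dots, h_j^{n-1}(w_j)$ from one end, decoding (via the composition of relabelings with $\tau_j$) to $R(x_j,1),\dots,R(x_j,n)$ in the order dictated by $f_j$. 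Once this single-pair construction is nailed down, combining the $m$ pieces and reading off the depth bound is routine.
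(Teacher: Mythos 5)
Your overall strategy is the paper's: induction on depth, a per-term construction in which a self-reproducing letter spawns copies that then age under the inner morphism, forward growth by appending on the right and backward growth by prepending on the left, renaming to disjoint alphabets before concatenating, and the growth bound by summing $\Theta(i^{\depth(x_j)-1})$. However, two concrete points in your inductive step are left unresolved, and as literally described the construction does not meet the lemma's exact requirement $\tau(h^n(w)) = R(l,n+1)$. First, your driver $s_j$ is a separate letter, and a coding is letter-to-letter and non-erasing, so $\tau(s_j)$ injects a spurious letter into every decoded word; it cannot be erased by $\tau$. Second, there is an off-by-one: by your own accounting the released copies decode to $R(x_j,1)\dotsm R(x_j,n)$ after $n$ steps, whereas the lemma needs $f_j(x_j,n+1)$, i.e.\ $n+1$ blocks, and in particular at $n=0$ the starting string must already decode to $R(x_j,1)$. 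The paper fixes both at once by merging the driver with a boundary letter of the inductively obtained string $w'_j$: for $F$ it sets $w_j = a\,w'_j[2]\dotsm w'_j[|w'_j|]$ with $h_j(a) = w_j\,h_j(w'_j[1])$ and $\tau_j(a) = \tau'_j(w'_j[1])$, so that $h_j(w_j) = w_j\,h_j(w'_j)$ and $\tau_j(h_j^n(w_j)) = \prod_{i=0}^{n}\tau'_j(h'^{\,i}_j(w'_j)) = F(x_j,n+1)$ exactly (symmetrically, the new letter replaces the last letter of $w'_j$ for $B$). Your ``intermediate fixed letter'' and ``one-step-delayed encoding'' remarks gesture at this, but the merging is the missing concrete step.

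By contrast, the point you single out as the delicate one --- reconciling aging with block order --- is not actually a problem. With $h(s_j) = s_j\,\overline{w_j}$ the freshest copy always sits adjacent to the driver and age increases away from it, so reading left to right one gets $R(x_j,1), R(x_j,2), \dotsc$ automatically; prepending puts the driver at the right end and yields the reversed order needed for $B$. Your worry rests on placing the copy created at iteration $i$ as the $i$th block from the left, which is not where it ends up. Consequently no shifted copies of the alphabet are needed: the released copies can simply continue to evolve under (a renamed copy of) $h_j$ itself, exactly as in the paper.
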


\begin{proof}
The list $l$ has the form $[(f_1,x_1), \dotsc, (f_m,x_m)]$.
	
We proceed by induction on the depth $k$ of $l$.
	
If $k = 1$, then every $f_i = S$, so for all $n \geq 1$, $R(l,n) = x_1 \dotsm x_m$.  Then we can take $w = x_1 \dotsm x_m$, and for every letter $c$ in $w$, set $h(c) = \tau(c) = c$.  The morphism $h$ has growth $\Theta(n^0) = \Theta(1)$ on $w$ as desired.
	
If $k > 1$, suppose for induction that the claim is true for every list of depth $< k$.
	
For each $(f_j,x_j)$ with $1 \leq j \leq m$, we will describe how to construct a string $w_j$, morphism $h_j$, and coding $\tau_j$ so that $h_j$ has growth $\Theta(1)$ on $w_j$ if $f_j = S$ and growth $\Theta(n^{\depth(x_j)})$ on $w_j$ if $f_j$ is in $\{F,B\}$, and further that for all $n \geq 0$, $\tau_j(h_j^n(w_j)) = f_j(x_j,n+1)$.
	
If $f_j = S$, then $x_j$ is a string.  So set $w_j = x_j$, and for all $c$ in $x_j$, set $h_j(c) = \tau_j(c) = c$.  Then $h_j$ has growth $\Theta(1)$ on $w_j$ as desired.
	
If $f_j = F$, then since $\depth(x_j) < \depth(l)$, we can apply the induction hypothesis, obtaining $w'_j,h'_j,\tau'_j$ such that for all $n \geq 0$, $\tau'_j(h'^{\,n}_j(w'_j)) = R(x_j,n+1)$, and $h'_j$ has growth $\Theta(n^{\depth(x_j)-1})$ on $w'_j$.  For all $c$ in the alphabet of $h'_j$, set $h_j(c) = h'_j(c)$ and for all $c$ in the alphabet of $\tau'_j$, set $\tau_j(c) = \tau'_j(c)$.  Now, let $a$ be a new letter.  Set $w_j = a\ w'_j[2] \dotsm w'_j[|w'_j|]$.  Set $h_j(a) = w_j\ h_j(w'_j[1])$ and set $\tau_j(a) = \tau'_j(w'_j[1])$.  Then $h_j(w_j) = w_j\ h_j(w'_j)$, so we have for all $n \geq 0$,
\begin{align*}
	\tau_j(h_j^n(w_j)) &= \tau_j(w_j \prod_{i=1}^n h_j^i(w'_j)) \\
									&= \tau'_j(w'_j \prod_{i=1}^n h'^{\,i}_j(w'_j)) \\
									&= \prod_{i=0}^n \tau'_j(h'^{\,i}_j(w'_j)) \\
									&= \prod_{i=1}^{n+1} R(x_j,i) \\
									&= F(x_j,n+1)
\end{align*}
as desired.  From above, we have that for all $n \geq 0, |h_j^n(w_j)| = |\prod_{i=0}^n h'^{\,i}_j(w'_j)| = \sum_{i=0}^n |h'^{\,i}_j(w'_j)|$.  We know that $h'_j$ has growth $\Theta(n^{\depth(x_j)-1})$ on $w'_j$, so $|h'^{\,n}_j(w'_j)|$ is bounded above and below by polynomials of degree $\depth(x_j)-1$.  Therefore $h_j$ has growth $\Theta(n^{\depth(x_j)})$ on $w_j$, since for any polynomial $p(n)$ of degree $k$, the sum $\sum_{i=0}^n p(i)$ equals a polynomial $p'(n)$ of degree $k+1$.

If $f_j = B$, we proceed as in the case $f_j = F$, but this time set $w_j = w'_j[1] \dotsm w'_j[|w'_j|-1]\ a$, set $h_j(a) = h_j(w'_j[|w'_j|])\ w_j$, and set $\tau_j(a) = \tau'_j(w'_j[|w'_j|])$.  Then $h_j(w_j) = h_j(w'_j)\ w_j$, so for all $n \geq 0$,
\begin{align*}
	\tau_j(h_j^n(w_j)) &= \tau_j(\prod_1^{i=n} h_j^i(w'_j)\ \ w_j) \\
&= \tau'_j(\prod_1^{i=n} h'^{\,i}_j(w'_j)\ \ w'_j) \\
&= \prod_0^{i=n} \tau'_j(h'^{\,i}_j(w'_j)) \\
&= \prod_1^{i=n+1} R(x_j,i) \\
&= B(x_j,n+1)
\end{align*}
as desired.  Further, by the same reasoning as for the case $f_j = F$, $h_j$ has growth $\Theta(n^{\depth(x_j)})$ on $w_j$.

We now have $w_j,h_j,\tau_j$ for each $j$, and we want to make a unified $w,h,\tau$ for the whole list $l$.  First, we rename certain letters to avoid conflicts.  We say that a conflict occurs when there are $j_1 \neq j_2$ and a letter $c$ such that $c$ belongs to the alphabet of both $h_{j_1}$ and $h_{j_2}$.  If there is a conflict, take any such $c,j_1,j_2$.  Let $d$ be a new letter not appearing in any $w_j$, $h_j$, or $\tau_j$.  Replace all occurrences of $c$ in $w_{j_1}$, $h_{j_1}$, and the lefthand side of $\tau_{j_1}$ with $d$.  Repeat this process until no conflicts remain.

With all conflicts resolved, we set $w = w_1 \dotsm w_m$, and create a morphism $h$ and coding $\tau$ such that for every $1 \leq j \leq m$, for every $c$ in the alphabet of $h_j$, $h(c) = h_j(c)$ and $\tau(c) = \tau_j(c)$.  Now we have that for all $n \geq 0$,
	\begin{align*}
	\tau(h^n(w))	= \prod_{j=1}^m f_j(x_j,n+1) = R(l,n+1)
	\end{align*}
as desired.  Now, since $\depth(l) = k$ and $k > 1$, we have by definition that for every $j$ with $f_j$ in $\{F,B\}$, $\depth(x_j) \leq k-1$, and for at least one such $j$, $\depth(x_j) = k-1$.  Then by our construction, every $h_j$ has growth $O(n^{k-1})$ on $w_j$, and at least one $h_j$ has growth $\Theta(n^{k-1})$ on $w_j$.  Therefore $h$ has growth $\Theta(n^{k-1})$ on $w$.
\end{proof}

\begin{theorem}\label{zigzag-morphic}Every zigzag word of depth $k$ is a morphic word with growth $\Theta(n^k)$.\end{theorem}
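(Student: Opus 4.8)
The plan is to reduce to Lemma \ref{lemma1} and then wrap its output in a self‑referencing start letter, in the same way the $F$‑case of that lemma handles forward growth; the only real extra bookkeeping is the finite prefix $q$, together with the annoyance that a coding cannot erase a letter. Write the given zigzag word as $\alpha = q \prod_{i \geq 1} R(l,i)$ with $\depth(l) = k$, and note $k \geq 1$ by the definition of depth. Applying Lemma \ref{lemma1} to $l$ produces a string $w$, a morphism $h$, and a coding $\tau$ with $\tau(h^n(w)) = R(l,n+1)$ for all $n \geq 0$ and with $h$ of growth $\Theta(n^{k-1})$ on $w$. Since $\tau(w) = R(l,1) \in A^+$ we have $w \neq \lambda$; and since $k-1 \geq 0$, the growth bound forces $|h^n(w)| \geq 1$ for every $n$ (if $h^m(w) = \lambda$ for some $m$ then $|h^n(w)| = 0$ for all $n \geq m$, contradicting a lower bound of the form $c' n^{k-1}$ with $k-1 \geq 0$). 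Rewriting, $\alpha = q \prod_{n \geq 0} \tau(h^n(w)) = s \cdot \prod_{n \geq 1} \tau(h^n(w))$, where $s := q\,\tau(w)$ is a nonempty string, say $s = s[1] \cdots s[t]$ with $t \geq 1$.

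Next I would build the morphic representation. Introduce a fresh letter $c$ and, if $t \geq 2$, fresh letters $\hat s[2], \dots, \hat s[t]$, none of them in the alphabet of $h$. Extend $h$ to a morphism $h'$ by $h'(\hat s[i]) = \lambda$ and $h'(c) = c\,\hat s[2] \cdots \hat s[t]\,h(w)$, and extend $\tau$ to a coding $\tau'$ by $\tau'(\hat s[i]) = s[i]$ and $\tau'(c) = s[1]$. The tail of $h'(c)$ is $x := \hat s[2] \cdots \hat s[t]\,h(w)$, and ${h'}^m(x) = h^{m+1}(w) \neq \lambda$ for $m \geq 1$ while $x \neq \lambda$ for $m = 0$ (using $|h^n(w)| \geq 1$ throughout); hence $x$ is not mortal and $h'$ is prolongable on $c$. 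A one‑line induction gives ${h'}^n(c) = c\,\hat s[2] \cdots \hat s[t]\,h(w)\,h^2(w) \cdots h^n(w)$ for $n \geq 1$ (the $\hat s[i]$ vanish and each old block simply advances), so ${h'}^\omega(c) = c\,\hat s[2] \cdots \hat s[t]\,h(w)\,h^2(w)\,h^3(w) \cdots$, and therefore $\tau'({h'}^\omega(c)) = s[1]\,s[2] \cdots s[t]\,\tau(h(w))\,\tau(h^2(w)) \cdots = s \cdot \prod_{n \geq 1} \tau(h^n(w)) = \alpha$.

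It remains to check the growth of $h'$ on $c$. For $n \geq 1$ we have $|{h'}^n(c)| = t + \sum_{i=1}^n |h^i(w)|$. Since $|h^i(w)|$ is $\Theta(i^{k-1})$ and, using $|h^i(w)| \geq 1$ to absorb the finitely many small indices, lies between fixed positive multiples of $i^{k-1}$ for all $i \geq 1$, the partial sum $\sum_{i=1}^n |h^i(w)|$ is $\Theta(n^k)$ — this is the same ``sum of a degree‑$(k-1)$ polynomial is a degree‑$k$ polynomial'' step used in Lemma \ref{lemma1}. Because $k \geq 1$, the additive constant $t$ is dominated, so $|{h'}^n(c)|$ is $\Theta(n^k)$. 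Hence $(h',c,\tau')$ is a representation of $\alpha$ with growth $\Theta(n^k)$, which is what is claimed.

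The step I expect to be the only genuine obstacle is the treatment of the finite prefix: because $\tau'$ must be a coding it cannot send a letter to $\lambda$, so the initial block $s = q\,\tau(w)$ cannot be emitted by one mortal symbol. The fix above is to have the self‑referencing letter $c$ carry the first symbol $s[1]$ and fresh mortal letters $\hat s[2], \dots, \hat s[t]$ carry the rest, after which the construction is essentially the $F$‑case of Lemma \ref{lemma1}; verifying prolongability (i.e.\ that $h^n(w) \neq \lambda$ for all $n$) and the degree bump in the growth are routine.
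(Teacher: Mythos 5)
Your proposal is correct and follows essentially the same route as the paper: invoke Lemma \ref{lemma1}, introduce a fresh self-referencing start letter whose image appends $h(w)$, use fresh mortal letters (coded to the appropriate letters of $A$) to emit the finite prefix, and get the $\Theta(n^k)$ growth from summing the $\Theta(n^{k-1})$ block lengths. The only differences are cosmetic: the paper encodes the prefix via letters $c_i,d_i$ and an auxiliary letter $b$ with $h(b)=h(s[1]s[2])$, while you let the start letter carry $s[1]$ directly, and you derive the non-mortality of $w$ from the growth lower bound rather than from the construction.
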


\begin{proof}
Take any zigzag word $\alpha$ of depth $k$.  Then for some $q \in A^*$ and $l \in L$ with $\depth(l) = k$,
\begin{align*}
	\alpha = q \prod_{i \geq 1} R(l,i)
\end{align*}
By Lemma \ref{lemma1}, there are $w,h,\tau$ such that for all $n \geq 0$, $\tau(h^n(w)) = R(l,n+1)$, and $h$ has growth $\Theta(n^{k-1})$ on $w$.  Let $a,b,c_1,\dotsc,c_{|q|},d_1,\dotsc,d_{|w|}$ be new letters.  Let $s = c_1 \dotsm c_{|q|} \ d_1 \dotsm d_{|w|}\ h(w)$.  Since $w$ and $h(w)$ are nonempty, $|s| \geq 2$.  Set
\[
\begin{matrix*}[l]
	h(c_i) = \lambda & \ \tau(c_i) = q[i] \\
	h(d_i) = \lambda & \ \tau(d_i) = \tau(w[i]) \\
	h(a) = a\ b\ s[3] \dotsm s[|s|] & \ \tau(a) = \tau(s[1]) \\
	h(b) = h(s[1] s[2]) & \ \tau(b) = \tau(s[2])
\end{matrix*}
\]
Then for all $n \geq 1$, we have
\begin{align*}
\tau(h^n(a)) &= \tau(\prod_{i=0}^{n-1} h^i(s) ) \\
&= \tau( s \prod_{i=1}^{n-1} h^i(s) ) \\
&= \tau( s \prod_{i=2}^{n} h^i(w) ) \\
&= q\ \tau( w\ h(w) \prod_{i=2}^{n} h^i(w) ) \\
&= q \prod_{i=0}^n \tau(h^i(w)) \\
&= q 		\prod_{i=1}^{n+1}					R(l,i)
\end{align*}
and therefore $h$ is prolongable on $a$ and $\tau(h^\omega(a)) = \alpha$.  Now, $h$ has growth $\Theta(n^{k-1})$ on $w$, and $w$ is not mortal under $h$.  So $|h(w)|$ is bounded above and below by polynomials of degree $k-1 \geq 0$.  Therefore $h$ has growth $\Theta(n^k)$ on $a$, since for any polynomial $p(n)$ of degree $k$, the sum $\sum_{i=0}^n p(i)$ equals a polynomial $p'(n)$ of degree $k+1$.  Therefore $\alpha$ is a morphic word with growth $\Theta(n^k)$.
\end{proof}

\subsection{From morphic words to zigzag words}

We begin by defining the rank and level of a string under a morphism, as well as the concept of a normalized morphism.

\paragraph{Rank and level}

Let $h$ be a morphism.  Following \cite{er79}, we now define the \textbf{rank} of a letter $c$ under $h$, denoted $\rank(c,h)$.  Informally, the rank 0 letters are those that are finite under $h$, the rank 1 letters are those that are finite under $h$ once the rank 0 letters have been removed, the rank 2 letters are those that are finite under $h$ once the rank 0 and rank 1 letters have been removed, and so on.  Formally, for an alphabet $B$, let $\varphi(h,B)$ be the morphism such that for all $d \in A$, $\varphi(h,B)(d)$ is the string resulting from $h(d)$ by erasing all letters in $h(d)$ that are not in $B$.  We define $\rank(c,h)$ as follows:

\begin{itemize}
	\item If $L(c,h)$ is finite, then $\rank(c,h) = 0$.
	\item For $n \geq 1$, let $A_n$ = $A - \{d \mid \rank(d,h) < n\}$ and let $h_n = \varphi(h,A_n)$.  If $c \in A_n$ and $L(c,h_n)$ is finite, then $\rank(c,h) = n$.
\end{itemize}

Note that some letters may have no rank under $h$.  A string $x$ has rank under $h$ if $x \neq \lambda$ and every letter in $x$ has rank under $h$.  The rank of $x$ under $h$, denoted $\rank(x,h)$, is $\max \{\rank(c,h) \mid$ the letter $c$ occurs in $x\}$.  In \cite{er79} it is shown that $x$ has rank $k$ under $h$ iff $k$ is the minimal degree of a polynomial $p$ such that for every $n$, $p(n) \geq |h^n(x)|$.

We now introduce a more fine-grained concept of rank, called \textbf{level}.  For each letter $c$ with rank under $h$, we define $\level(c,h)$ as follows.  If $c$ is mortal, $\level(c,h) = 0$.  Otherwise, if $c$ is recursive (reachable from itself), $\level(c,h) = \rank(c,h) \cdot 2 + 1$.  Otherwise, $\level(c,h) = \rank(c,h) \cdot 2 + 2$.

Thus, a rank 0 letter may have level 0, 1, or 2, a rank 1 letter may have level 3 or 4, a rank 2 letter may have level 5 or 6, and so on.

A string $x$ has level under $h$ iff it has rank under $h$.  The level of $x$ under $h$, denoted $\level(x,h)$, is $\max \{\level(c,h) \mid$ the letter $c$ occurs in $x\}$.

If the intended morphism is clear from context, we write $\rank(c)$ instead of $\rank(c,h)$, $\level(x)$ instead of $\level(x,h)$, etc.

\paragraph{Normalized morphism}

Below we make use of the concept of a normalized morphism from \cite{dk2009}.  A \textbf{normalized} morphism $h$ has the following properties (among others which we omit):
\begin{itemize}
	\item $\alp(h(c)) = \alp(h^2(c))$ for all $c \in A$
	\item $h(c) = h^2(c)$ for all $c \in A$ such that $\rank(c,h) = 0$
\end{itemize}
By \cite[Lemma 17]{dk2009}, for every morphism $h$, there is a power $h' = h^t$ with $t \geq 1$ such that $h'$ is normalized.

\begin{lemma}\label{lemma-basic}Let $h$ be a morphism and let $c$ be a letter with rank under $h$.  Then $\rank(h(c)) = \rank(c)$.\end{lemma}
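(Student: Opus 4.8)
The plan is to prove the two inequalities $\rank(h(c)) \le \rank(c)$ and $\rank(h(c)) \ge \rank(c)$ separately, each by induction on the rank, using the characterization of rank from \cite{er79} that $\rank(x,h) = k$ iff $k$ is the minimal degree of a polynomial dominating $|h^n(x)|$. Write $k = \rank(c)$; note $k$ is finite by hypothesis, and every letter of $h(c)$ has rank (if some letter $d$ of $h(c)$ had no rank, then $|h^n(d)|$ would not be polynomially bounded, hence neither would $|h^n(c)| \ge |h^n(h(c))| = |h^{n+1}(c)|$, contradicting that $c$ has rank $k$). So $\rank(h(c))$ is well-defined.

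First I would show $\rank(h(c)) \le k$. We have $|h^n(h(c))| = |h^{n+1}(c)|$, and since $c$ has rank $k$, $|h^m(c)|$ is bounded by a polynomial $p(m)$ of degree $k$; then $|h^{n+1}(c)| \le p(n+1)$, which is again a polynomial in $n$ of degree $k$. By the \cite{er79} characterization, $\rank(h(c)) \le k$.

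For the reverse inequality $\rank(h(c)) \ge k$, the point is that $c$ appears in $h^i(c)$ for $i \ge 1$ exactly when $c$ is recursive, so we cannot simply say $c$ occurs in $h(c)$. Instead I would argue: since $\rank(c) = k$, there is a letter $d$ occurring in $c$'s ``development'' with $\rank(d) = k$ — more precisely, either $c$ itself is recursive with $\rank(c)=k$, or by the definition of rank there is some $d \in \alp(h^j(c))$ for some $j \ge 1$ with $\rank(d) = k$ and $d$ recursive (a rank-$k$ letter that is finite under $h_k$ must, if non-mortal, eventually reproduce a recursive rank-$k$ letter, since otherwise all its images would consist only of lower-rank letters and it would have rank $< k$). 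Pick the least $j \ge 1$ with $\alp(h^j(c))$ containing a letter of rank $k$; then $j \ge 1$ means this letter already occurs in $h^j(c) = h^{j-1}(h(c))$, so $h(c)$ has a letter of rank $k$ in its development, giving $\rank(h(c)) \ge k$ by monotonicity of rank under iteration (which follows from the $O(n^k)$ characterization applied to $h(c)$, as $|h^n(h(c))| \ge |h^{n-(j-1)}(\text{that letter})|$ for $n \ge j-1$).

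The main obstacle is the lower bound: handling the case where $c$ is \emph{not} recursive, so that $c$ itself disappears after one step and one must track down a rank-$k$ letter that survives into $h(c)$. The clean way to dispatch this is the observation that a non-mortal letter of rank $k$ must reach a recursive letter of rank $k$ (otherwise its iterates only ever contain letters of rank $< k$, forcing rank $< k$), combined with the fact that the first appearance of a rank-$k$ letter in $h, h^2, h^3, \dots$ applied to $c$ occurs at step $\ge 1$ and hence is visible already within $h(c)$'s own iteration. Alternatively, one can sidestep recursion entirely by just applying the degree characterization twice: $|h^n(c)| = |h^{n-1}(h(c))| \le (\text{poly of degree } \rank(h(c)))(n-1)$ for $n \ge 1$, which forces $\rank(c) \le \rank(h(c))$ directly. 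This second route is shorter and avoids the case analysis, so I would in fact present the whole lemma as: both $|h^n(h(c))| = |h^{n+1}(c)|$ and $|h^n(c)| = |h^{n-1}(h(c))|$ (for $n\ge1$) are, up to an argument shift, the growth sequences of the other, so by \cite{er79} the minimal dominating polynomial degrees coincide, i.e. $\rank(h(c)) = \rank(c)$.
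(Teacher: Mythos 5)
Your proof is correct, but it takes a genuinely different route from the paper's. You derive both inequalities from the Ehrenfeucht--Rozenberg characterization of rank ($\rank(x,h)=k$ iff $k$ is the minimal degree of a polynomial dominating $|h^n(x)|$), observing that $|h^n(h(c))| = |h^{n+1}(c)|$ and $|h^n(c)| = |h^{n-1}(h(c))|$ for $n \geq 1$ are shifts of one another, and you correctly supply the needed side condition that $h(c)$ has rank at all (each letter of $h(c)$ has polynomially bounded growth, hence has rank, by the same characterization). The paper instead argues purely from the recursive definition of rank: using the erasing morphisms $h_n = \varphi(h,A_n)$ and finiteness of $L(\cdot,h_n)$, it shows every letter of $h(c)$ has rank $\leq \rank(c)$, and that when $\rank(c) > 0$ some letter of $h(c)$ has rank exactly $\rank(c)$ (otherwise $L(c,h_{\rank(c)-1})$ would be finite, contradicting minimality). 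Your argument is shorter and avoids the case analysis, but it imports the full strength of \cite[Theorem 3]{er79}; the paper's proof is self-contained at the level of the definition and incidentally yields the finer letter-by-letter facts that later lemmas in the same vein exploit. One shared caveat: both proofs tacitly assume $h(c) \neq \lambda$ (otherwise $\rank(h(c))$ is undefined, e.g.\ if $c$ is mortal in one step); this is harmless since the lemma is only applied to non-mortal letters. Your first, discarded route for the lower bound (chasing a recursive rank-$k$ letter through the iterates) is the sketchier part of the proposal, but since you commit to the shift argument instead, the proof you actually present is sound.
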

		
\begin{proof}
Let $i = \rank(c)$.  For $n \geq 1$, let $A_n = A - \{b \mid \rank(b,h) < n\}$ and let $h_n = \varphi(h,A_n)$.  Suppose $h(c)$ contains a letter $d$ of rank $> i$.  Since $\rank(c) = i$, $L(c,h_i)$ is finite.  Since $\rank(d) > i$, $L(d,h_i)$ is infinite.  But $h_i(c)$ contains $d$, making $L(c,h_i)$ infinite, a contradiction.  So every letter in $h(c)$ has rank $\leq i$.  Then if $i = 0$, we have $\rank(h(c)) = 0$.  So say $i > 0$.  If every letter in $h(c)$ has rank $< i$, then every letter in $h_{i-1}(c)$ has rank $i-1$.  Then for every letter $d$ in $h_{i-1}(c)$, $L(d,h_{i-1})$ is finite.  But then $L(h_{i-1}(c),h_{i-1})$ is finite; hence $L(c,h_{i-1})$ is finite, and therefore $\rank(c) = i-1$, a contradiction.  So at least one letter in $h(c)$ has rank $i$, and therefore $\rank(h(c)) = i$.
\end{proof}

\begin{lemma}\label{lemma-monorec}Let $h$ be a morphism, let $c$ be a letter with rank under $h$, and suppose $\rank(c) \geq 1$ and $h(c) = sct$ for some $s,t \in A^*$.  Then $\rank(st) = \rank(c)-1$.\end{lemma}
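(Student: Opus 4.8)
The plan is to derive the conclusion by comparing the growth function of $c$ with that of $st$ via a telescoping sum, using the Ehrenfeucht--Rozenberg characterization of rank by polynomial growth (as in Proposition~\ref{prop1}).

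First I would dispose of the degenerate case: if $st = \lambda$, then $h(c) = c$, so $L(c,h) = \{c\}$ is finite and $\rank(c) = 0$, contradicting the hypothesis $\rank(c) \ge 1$; hence $st \neq \lambda$. Next, by Lemma~\ref{lemma-basic} we have $\rank(h(c)) = \rank(c) =: i$, with $i \ge 1$; in particular every letter of $h(c) = sct$, and therefore every letter of $st$, has a rank, so (as $st \neq \lambda$) the quantity $r := \rank(st)$ is well defined. It is this $r$ that I must show equals $i-1$.

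For the main step, note that since $i = \rank(c) \ge 1$, the letter $c$ is not mortal, and by the characterization from \cite{er79} together with Proposition~\ref{prop1}, $|h^n(c)|$ is in $\Theta(n^i)$. From $h^n(c) = h^{n-1}(s)\,h^{n-1}(c)\,h^{n-1}(t)$ I obtain $|h^{n-1}(st)| = |h^n(c)| - |h^{n-1}(c)|$ for all $n \ge 1$, and summing over $n = 1, \dots, N$ telescopes to
\[
\sum_{j=0}^{N-1} |h^j(st)| \;=\; |h^N(c)| - 1,
\]
which is in $\Theta(N^i)$. If $st$ were mortal this partial sum would be bounded, which is impossible since $i \ge 1$; hence $st$ is not mortal, and again by \cite{er79} (cf.\ Proposition~\ref{prop1}), $|h^j(st)|$ is in $\Theta(j^r)$. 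Therefore $\sum_{j=0}^{N-1} |h^j(st)|$ is in $\Theta(N^{r+1})$, and comparing with the displayed identity forces $r + 1 = i$, i.e.\ $\rank(st) = i - 1$.

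The one point to handle with care is the asymptotics: one should not try to read $|h^{n-1}(st)| = \Theta(n^{i-1})$ directly off the pointwise difference $|h^n(c)| - |h^{n-1}(c)|$, since a difference of two $\Theta(n^i)$ sequences need not have order $n^{i-1}$; passing to the telescoped sum is what makes the estimate robust, and one must also separately exclude the possibility that $st$ is mortal before applying the $\Theta(j^r)$ estimate. A more combinatorial alternative would argue directly with $h_i = \varphi(h,A_i)$ in the style of Lemma~\ref{lemma-basic} --- first that $c$ occurs exactly once in $h(c)$, since otherwise $|h_i^{\,n}(c)|$ would grow exponentially and $L(c,h_i)$ would be infinite --- but the growth argument above is shorter and reuses tools already developed.
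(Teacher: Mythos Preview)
Your proof is correct and takes a genuinely different route from the paper's. The paper argues purely combinatorially from the recursive definition of rank: with $i = \rank(c)$, it first shows $\rank(st) \geq i-1$ by noting that otherwise $h_{i-1}(c) = c$ would force $\rank(c) \leq i-1$; then, using that $L(c,h_i)$ is finite, it deduces that $st$ is mortal under $h_i$, whence $h_{i-1}^k(st)$ consists only of rank-$(i{-}1)$ letters for some $k$, so $L(st,h_{i-1})$ is finite and $\rank(st) \leq i-1$. This is exactly the ``more combinatorial alternative'' you sketch in your last paragraph.

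Your approach instead leverages the growth characterization (Proposition~\ref{prop1} and \cite{er79}) together with the telescoping identity $\sum_{j=0}^{N-1}|h^j(st)| = |h^N(c)| - 1$, comparing $\Theta(N^i)$ against $\Theta(N^{r+1})$. This is a clean and self-aware argument: you correctly avoid the trap of reading off $|h^{n-1}(st)| \in \Theta(n^{i-1})$ from the pointwise difference, and you handle the mortality of $st$ before invoking the $\Theta(j^r)$ estimate. The trade-off is that your proof imports the Ehrenfeucht--Rozenberg growth results as a black box, whereas the paper's proof is self-contained at the level of the definition of rank; on the other hand, your route is arguably more conceptual once those tools are in hand, and it makes the ``rank counts growth degree'' intuition do the work directly.
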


\begin{proof}
For $n \geq 1$, let $A_n = A - \{b \mid \rank(b,h) < n\}$ and let $h_n = \varphi(h,A_n)$.  Let $i = \rank(c)$ and $j = \rank(st)$.  Suppose $j < i-1$.  Then $h_{i-1}(c) = c$, so $L(c,h_{i-1})$ is finite.  But then $\rank(c) = i-1$, a contradiction.  So $j \geq i-1$.  Now, since $L(c,h_i)$ is finite, $st$ must be mortal under $h_i$.  So for some $k \geq 0$, $h_i^k(st) = \lambda$.  Then $h_{i-1}^k(st)$ consists entirely of letters of rank $i-1$.  But then $L(h_{i-1}^k(st), h_{i-1})$ is finite; hence $L(st, h_{i-1})$ is finite.  Therefore since $j \geq i-1$, applying the definition of rank gives $j = i-1$.
\end{proof}

\begin{lemma}\label{lemma-letter}Let $h$ be a normalized morphism and let $c$ be a letter that has rank under $h$ and is not mortal under $h$.  Suppose $h(c)$ does not include $c$.  Then $\level(h(c)) = \level(c)-1$.\end{lemma}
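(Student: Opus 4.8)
The plan is to set $i=\rank(c)$ and show that $\level(c)=2i+2$ while $\level(h(c))=2i+1$. First I would note that $c$ is non-recursive: since $h$ is normalized, $\alp(h(c))=\alp(h^n(c))$ for all $n\ge 1$ (iterating $\alp(h(d))=\alp(h^2(d))$), so the hypothesis $c\notin\alp(h(c))$ gives $c\notin\alp(h^n(c))$ for every $n\ge 1$. As $c$ is non-mortal, $\level(c)=2i+2$, and it remains to prove $\level(h(c))=2i+1$. By Lemma \ref{lemma-basic}, $\rank(h(c))=i$, so every letter of $h(c)$ has rank $\le i$; a rank-$<i$ letter contributes level at most $2(i-1)+2=2i$, and a rank-$i$ letter contributes level $0$ (if mortal), $2i+1$ (if recursive), or $2i+2$ (if non-mortal and non-recursive). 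Hence it suffices to prove: (a) every non-mortal rank-$i$ letter occurring in $h(c)$ is recursive; and (b) at least one recursive rank-$i$ letter occurs in $h(c)$.

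For both claims I would introduce an auxiliary non-erasing morphism. With $A_i$ and $h_i=\varphi(h,A_i)$ as in the definition of rank, let $V$ be the set of letters of rank $i$ that are non-mortal under $h$, and define $g\colon V^*\to V^*$ by letting $g(e)$ be $h_i(e)$ with every letter that is mortal under $h$ deleted. A few routine verifications are needed: every letter of $h_i(e)$ has rank exactly $i$ (rank $\ge i$ by construction, rank $\le\rank(e)=i$ by Lemma \ref{lemma-basic}); for $e\in V$ the string $h_i(e)$ contains a non-mortal letter, so $g$ is non-erasing and really maps into $V^*$; and, since a letter mortal under $h$ maps under $h$ to a string of mortal letters, an induction shows that $g^n(c)$ is exactly $h_i^n(c)$ with the mortal letters deleted. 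Using also that along any path in the ``occurs-in-image'' digraph ranks are non-increasing — so a rank-$i$ letter of $h^n(c)$ has no rank-$<i$ ancestor descending from $c$, giving $\alp(h_i^n(c))=\alp(h^n(c))\cap A_i$ — one obtains $\alp(g^n(c))=\alp(h^n(c))\cap V$, which by normalization equals $\alp(h(c))\cap V=:W$ for every $n\ge 1$.

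Now $L(c,h_i)$ is finite because $\rank(c)=i$, and since $g^n(c)$ is $h_i^n(c)$ with letters deleted, $L(c,g)$ is finite too; hence $(g^n(c))_{n\ge 1}$ is eventually periodic, say $g^m(c)=g^{m+q}(c)$ with $m,q\ge 1$. Writing $g^m(c)=e_1\cdots e_r$, the identity $g^q(e_1)\cdots g^q(e_r)=e_1\cdots e_r$ together with $g$ being non-erasing (so each $g^q(e_k)$ is nonempty and the $r$ lengths sum to $r$) forces $g^q(e_k)=e_k$ for every $k$; since each application of $g$ only deletes letters from an $h$-image, $\alp(g^q(e_k))\subseteq\alp(h^q(e_k))$, so each $e_k$ is recursive under $h$. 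But $\alp(g^m(c))=W$, so every letter of $W$ is recursive — that is (a) — and $g^m(c)$ is nonempty, so $W\ne\emptyset$ and any element of $W$ witnesses (b). Combining (a), (b), and the rank/level bookkeeping yields $\level(h(c))=2i+1=\level(c)-1$.

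I expect the main obstacle to be getting the auxiliary morphism $g$ exactly right, in particular the interplay between erasing the mortal letters and passing to $h_i$: one must check that this passage does not drop any rank-$i$ letter of $h^n(c)$ (the ``ranks non-increasing along image paths'' point), that $g$ is genuinely non-erasing (which needs a small case distinction according to whether $i=0$, since for $i\ge 1$ every rank-$i$ letter is automatically non-mortal), and that $\alp(g^n(c))$ is literally constant — it is precisely the combination of normalization (which stabilizes $\alp(h^n(c))$) with the length-preservation argument for a fixed point of a non-erasing morphism that forces the persistent top-rank letters to be recursive. The remaining steps — the non-recursiveness of $c$ and the level case analysis — are routine once Lemma \ref{lemma-basic} and the two normalization properties are available.
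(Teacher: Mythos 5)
Your proposal is correct, and its core differs from the paper's argument. Both proofs share the same outer structure: use $\alp(h(c))=\alp(h^2(c))$ to conclude $c$ is non-recursive (so $\level(c)=2\rank(c)+2$), invoke Lemma \ref{lemma-basic} to get $\rank(h(c))=\rank(c)=i$, and reduce everything to showing that the non-mortal rank-$i$ letters of $h(c)$ are recursive and that at least one exists. Where you diverge is in how you prove that recursiveness claim. The paper takes a single non-mortal rank-$i$ letter $d$ of $h(c)$, uses $\alp$-stability to find $d$ in $h^{|A|+2}(c)$, extracts an ancestor chain of rank-$i$ letters, pigeonholes to get a recursive letter $e$ on the chain, and then uses two further facts --- that a normalized morphism satisfies $h(e)=set$ for recursive $e$ (one of the ``omitted'' normalization properties) and Lemma \ref{lemma-monorec} (so $st$ is mortal or of lower rank) --- to force $d=e$. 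You instead build the auxiliary non-erasing morphism $g$ on the non-mortal rank-$i$ letters, observe that $g^n(c)$ is $h_i^n(c)$ with mortal letters deleted, use finiteness of $L(c,h_i)$ to get $g^m(c)=g^{m+q}(c)$, and then the length argument for a non-erasing morphism fixing a word forces $g^q(e_k)=e_k$ letterwise, so every persistent top-rank letter is recursive; constancy of $\alp(g^n(c))=W$ (from normalization) and non-erasure give both (a) and (b) at once. Your route buys independence from Lemma \ref{lemma-monorec} and from the unstated normalization property about recursive letters fixing themselves in their images, and it handles existence of a non-mortal rank-$i$ witness explicitly (via $W\neq\emptyset$), which the paper leaves implicit; the cost is the bookkeeping you yourself flag --- verifying that $g$ maps $V$ into $V^+$, that deletion of mortal letters commutes with iteration, and that $\alp(h_i^n(c))=\alp(h^n(c))\cap A_i$ via the non-increasing-rank observation --- all of which you identify correctly and which do go through.
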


\begin{proof}
Since $h$ is normalized, $\alp(h(c)) = \alp(h^2(c))$.  Then for all $i \geq 1$, $\alp(h^i(c)) = \alp(h(c))$.  So since $h(c)$ does not include $c$, there is no $i \geq 1$ such that $h^i(c)$ includes $c$.  Therefore $c$ is not recursive.  Hence since $c$ is not mortal, $\level(c) = \rank(c) \cdot 2 + 2$.  Now, by Lemma \ref{lemma-basic}, $\rank(h(c)) = \rank(c)$, so $h(c)$ contains at least one letter $d$ such that $\rank(d) = \rank(c)$.  Take any such $d$ that is not mortal.  Then since $h$ is normalized, $d$ appears in $h^{|A|+2}(c)$.  So $d$ can be reached from $c$ via a chain of $|A|+1$ ancestor letters, each of $\rank(c)$.  Some letter $e$ appears twice in this chain.  Therefore $e$ is recursive.  Then since $h$ is normalized, $h(e) = set$ for some $s,t \in A^*$.  If $\rank(e) = 0$, then $st$ is mortal, and if $\rank(e) \geq 1$, then by Lemma \ref{lemma-monorec}, $\rank(st) = \rank(c)-1$.  In either case, $d$ cannot appear in $st$ or be descended from $st$, and therefore $d = e$.  Hence $d$ is recursive.  This holds for every letter $d$ with $\rank(c)$ in $h(c)$ that is not mortal.  Therefore $\level(h(c)) = \rank(c) \cdot 2 + 1 = \level(c)-1$.
\end{proof}

\begin{lemma}\label{lemma-string-0}Let $h$ be a normalized morphism and let $x$ be a string with rank 0 and level $> 0$ under $h$.  Take any $v \geq \level(x)$.  Then there is an $l \in L$ of depth $1$ such that for all $i \geq 1$, $R(l,i) = h^{v+i}(x)$.\end{lemma}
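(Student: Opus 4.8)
\textbf{Proof plan for Lemma \ref{lemma-string-0}.}

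The plan is to argue that since $x$ has rank $0$, all of its letters are finite under $h$, so the sequence of strings $h^n(x)$ eventually stabilizes; once it stabilizes, the word $\prod_{i\geq 1} h^{v+i}(x)$ is just a single fixed string repeated forever, which is exactly what a depth-$1$ list $l=[(S,r)]$ produces via $R(l,i)=r$. The role of the hypothesis $v\geq\level(x)$ is to guarantee that we have already ``passed'' the stabilization point, so that $h^{v+i}(x)$ is the same nonempty string $r$ for every $i\geq 1$.

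First I would unpack what rank $0$ and level $>0$ mean for $x$. Since $\rank(x)=0$, every letter $c$ occurring in $x$ has $\rank(c)=0$, i.e. $L(c,h)$ is finite. Since $h$ is normalized, the second bullet of the normalization properties gives $h(c)=h^2(c)$ for every such letter $c$. Extending this morphically, $h(x)=h^2(x)$, and hence $h^n(x)=h^{n+1}(x)$ for all $n\geq 1$; so the string $h^n(x)$ is constant for all $n\geq 1$. Call this common value $r := h(x)$. The level hypothesis $\level(x)>0$ means some letter of $x$ has positive level, i.e. is non-mortal (a rank-$0$ mortal letter has level $0$, a non-mortal one has level $1$ or $2$); hence $r=h(x)$ contains a non-mortal letter and in particular $r\neq\lambda$, so $r\in A^+$. (Even without tracking level, $r\neq\lambda$ would follow as long as $x$ is not mortal, which is implied by $\level(x)>0$.)

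Next I would simply exhibit the list. Set $l := [(S,r)]$. This is a valid element of $L$ since $r\in A^+$, and $\depth(l)=1$ because its single pair uses the function $S$. By the definition of $R$ and $S$, for every $i\geq 1$ we have $R(l,i)=S(r,i)=r$. On the other hand, since $v\geq\level(x)\geq 1$, for every $i\geq 1$ the exponent $v+i$ satisfies $v+i\geq 2\geq 1$, so $h^{v+i}(x)=r$ by the stabilization observed above. Therefore $R(l,i)=r=h^{v+i}(x)$ for all $i\geq 1$, as required.

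The only real subtlety — and the one place where care is needed — is making sure the normalization property $h(c)=h^2(c)$ is actually available for the letters in $x$: it is stated in the excerpt only for letters $c$ with $\rank(c,h)=0$, which is precisely our situation since $\rank(x)=0$ forces every letter of $x$ to have rank $0$. Once that is in hand, everything else is immediate, and the hypothesis $v\geq\level(x)$ is used only to keep the exponent $v+i$ in the stable range $n\geq 1$; any $v\geq 1$ would in fact suffice, but $v\geq\level(x)$ is the uniform bound the later inductive argument will want. I would therefore expect no genuine obstacle here, just the bookkeeping of invoking the normalization hypothesis at the letter level and lifting it to the string $x$.
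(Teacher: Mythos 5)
Your proposal is correct and follows essentially the same route as the paper's proof: both use the normalization property to get $h^i(x) = h(x)$ for all $i \geq 1$, use $\level(x) > 0$ to conclude $x$ is not mortal so $h(x) \neq \lambda$, and take $l = [(S,h(x))]$. The only difference is that you spell out the letter-level step $h(c) = h^2(c)$ explicitly, which the paper leaves implicit.
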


\begin{proof}
Since $\rank(x) = 0$, $L(x,h)$ is finite.  Let $s = h(x)$.  Since $\level(x) > 0$, $x$ is not mortal under $h$, so $s \neq \lambda$.  Further, since $h$ is normalized, $h^i(x) = s$ for all $i \geq 1$.  So let $l = [(S,s)]$.  Then $l$ has depth 1 and for all $i \geq 1$, $R(l,i) = s = h^{v+i}(x)$.
\end{proof}

\begin{lemma}\label{lemma-string}Let $h$ be a normalized morphism and let $x$ be a string with rank under $h$ such that $\level(x) > 0$.  Let $r = \rank(x)$ and take any $v \geq \level(x)$.  Then there is an $l \in L$ of depth $r+1$ such that for all $i \geq 1$, $R(l,i) = h^{v+i}(x)$.\end{lemma}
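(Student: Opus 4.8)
The plan is to induct on $\level(x)$, using Lemma~\ref{lemma-string-0} as the base case and Lemmas~\ref{lemma-basic}, \ref{lemma-monorec}, and~\ref{lemma-letter} in the inductive step. The key structural observation is that if $\rank(x)=r$ then $\level(x)\in\{2r+1,2r+2\}$: the letter of $x$ of rank $r$ has level at least $2r+1$, while no letter of $x$ exceeds level $2r+2$. If $r=0$ we are done immediately by Lemma~\ref{lemma-string-0}, which gives a list of depth $1=r+1$. So assume $r\geq 1$.

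Suppose first $\level(x)=2r+2$. Then $x$ contains a non-recursive, non-mortal letter of rank $r$, so by Lemma~\ref{lemma-letter} (together with a letter-by-letter check, using Lemmas~\ref{lemma-basic}, \ref{lemma-monorec}, and~\ref{lemma-letter}, that no letter of $h(x)$ rises above level $2r+1$) the string $x'=h(x)$ is nonempty with $\rank(x')=r$ and $\level(x')=2r+1$. Applying the induction hypothesis to $x'$ with parameter $v-1\geq 2r+1=\level(x')$ gives $l\in L$ of depth $r+1$ with $R(l,i)=h^{(v-1)+i}(h(x))=h^{v+i}(x)$ for all $i\geq 1$, as desired.

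Now suppose $\level(x)=2r+1$. Decompose $x$ into letters, discard the mortal ones (each maps to $\lambda$ under $h$ by normalization), and call the survivors $c_1,\dots,c_q$ in order ($q\geq 1$), so $h^{v+i}(x)=\prod_{j=1}^q h^{v+i}(c_j)$ for all $i\geq 1$. For each $c_j$ with $\level(c_j)<\level(x)$ (equivalently $\rank(c_j)\leq r-1$), the induction hypothesis applied to $c_j$ with the same $v$ yields a list $l_j$ of depth $\rank(c_j)+1\leq r$ with $R(l_j,i)=h^{v+i}(c_j)$. The remaining $c_j$ are precisely the recursive rank-$r$ letters; for such a $c_j$, normalization gives $h(c_j)=s_jc_jt_j$ with $c_j\notin\alp(s_j)\cup\alp(t_j)$, hence $h^{v+i}(c_j)=\bigl(\prod_1^{k=v+i-1}h^k(s_j)\bigr)\,s_jc_jt_j\,\bigl(\prod_{k=1}^{v+i-1}h^k(t_j)\bigr)$. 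Splitting each of these products at exponent $v$ separates off a fixed block of $v-1$ factors (exponents below $v$) and a variable block of exactly $i$ factors (exponents $v,\dots,v+i-1$); the two variable blocks become $B(l_{s_j},i)$ and $F(l_{t_j},i)$, where $l_{s_j}$ and $l_{t_j}$ are obtained from the induction hypothesis applied to $s_j$ and $t_j$ with parameter $v-1\geq 2r\geq\level(s_j),\level(t_j)$ (these strings have rank $\leq r-1$ by Lemma~\ref{lemma-monorec}, hence level $\leq 2r<\level(x)$), while the fixed blocks together with $s_jc_jt_j$ collapse into a single nonempty string $z_j$ (it contains $c_j$). Thus $l_j=[(B,l_{s_j}),(S,z_j),(F,l_{t_j})]$, with the $B$ term omitted when $s_j$ is empty or mortal and the $F$ term omitted when $t_j$ is, and $R(l_j,i)=h^{v+i}(c_j)$. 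Taking $l=\append(l_1,\dots,l_q)$ gives $R(l,i)=h^{v+i}(x)$ for all $i\geq 1$.

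It then remains to check $\depth(l)=\max_j\depth(l_j)=r+1$. Every $l_j$ has depth at most $r+1$: the directly-recursed ones at most $r$; the $B$/$F$ terms have depth $\rank(s_j)+2\leq r+1$; each $z_j$ has depth $1$. For the lower bound, the recursive rank-$r$ letter realizing $\level(x)=2r+1$ is one of the $c_j$, and for it at least one of $s_j,t_j$ has rank exactly $r-1$ and is non-mortal — they cannot both be empty or mortal, else $L(c_j,h)$ would be finite and $\rank(c_j)=0$ — so the corresponding $B$- or $F$-term has depth $r+1$. I expect the recursive-letter case to be the main obstacle: the delicate points are the index shift that makes the variable blocks have exactly $i$ factors (so they are expressible by $B$ and $F$), and the degenerate sub-cases where $s_j$ or $t_j$ is empty or mortal while the depth must still come out to exactly $r+1$.
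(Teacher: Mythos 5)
Your proof is correct and follows essentially the same route as the paper's: the heart of both arguments is the same splitting of $h^{v+i}(c)$ for a recursive letter $c$ with $h(c)=sct$ into a backward block $B(l_s,i)$, the static string $h^v(c)$, and a forward block $F(l_t,i)$, with the same appeal to Lemmas \ref{lemma-basic}--\ref{lemma-string-0} and the same treatment of the mortal-$s$/mortal-$t$ degeneracies and of the depth bookkeeping. The only difference is organizational: you induct on $\level(x)$ (with an even/odd level case split, applying $h$ to the whole string in the even case), whereas the paper inducts on $v$ and applies the pass-to-$h(c)$ step letter by letter to the non-recursive letters.
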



\begin{proof}

We proceed by induction on $v$.  For the base case of $v = 1$, we have $\level(x) = 1$.  Then $r = 0$, so the claim holds by Lemma \ref{lemma-string-0}.

So say $v \geq 2$.  Suppose for induction that for all $v' < v$, for every string $x'$ with rank under $h$ such that $v' \geq \level(x') > 0$, there is an $l' \in L$ of depth $\rank(x')+1$ such that for all $i \geq 1$, $R(l',i) = h^{v'+i}(x')$.

For each $j$ from 1 to $|x|$ such that $x[j]$ is not mortal, we will construct an $l_j \in L$ of depth $\rank(x[j])+1$ such that for all $i \geq 1$, $R(l_j,i) = h^{v+i}(x[i])$.

To construct $l_j$, proceed as follows.  Let $c = x[j]$.  Suppose $h(c)$ does not include $c$.  Since $c$ is not mortal, we have $\level(h(c)) > 0$.  Further, we can apply Lemma \ref{lemma-letter}, obtaining $\level(h(c)) = \level(c)-1$.  Let $v' = v-1$ and $x' = h(c)$.  Then $v' \geq \level(x') > 0$, so we can apply the induction hypothesis, obtaining an $l' \in L$ of depth $\rank(h(c))+1$ such that for all $i \geq 1$, $R(l',i) = h^{v-1+i}(h(c))$.  Then for all $i \geq 1$, $R(l',i) = h^{v+i}(c)$ as desired.  Further, by Lemma \ref{lemma-basic}, $\rank(h(c)) = \rank(c)$, so $\depth(l') = \rank(c)+1$ as desired.  So set $l_j$ to $l'$.



So say $h(c)$ includes $c$.  Then $h(c) = sct$ for some $s,t \in A^*$.  If $\rank(c) = 0$, then the claim holds by Lemma \ref{lemma-string-0}.  So say $\rank(c) \geq 1$.  Then we can apply Lemma \ref{lemma-monorec}, obtaining $\rank(st) = \rank(c)-1$.  Suppose that neither $s$ nor $t$ is mortal.  Then by the induction hypothesis, taking $v' = v-1$ and $x' = s$, there is an $l_s \in L$ of depth $\rank(s)+1$ such that for all $i \geq 1$, $R(l_s,i) = h^{v-1+i}(s)$.  Similarly, taking $x' = t$, there is an $l_t \in L$ of depth $\rank(t)+1$ such that for all $i \geq 1$, $R(l_t,i) = h^{v-1+i}(t)$.  Now set
\begin{align*}
	l_j = [(B,l_s), (S,h^v(c)), (F,l_t)]
\end{align*}

We have for all $n \geq 1$,
\begin{align*}
	R(l_j, n) &= B(l_s,n)\ \ \ \ \ h^v(c)\ \ \ \ \ F(l_t,n) \\
	&= \prod_1^{i=n} R(l_s,i)\ \ \ \ \ h^v(c)\ \ \ \ \ \prod_{i=1}^n R(l_t,i) \\
	&= \prod_1^{i=n} h^{v-1+i}(s)\ \ \ \ \ h^v(c)\ \ \ \ \ \prod_{i=1}^n h^{v-1+i}(t) \\
	&= \prod_v^{i=v+n-1} h^i(s)\ \ \ \ \ h^v(c)\ \ \ \ \ \prod_{i=v}^{v+n-1} h^i(t) \\
	&= \prod_v^{i=v+n-1} h^i(s)\ \ \ \ \ \prod_0^{i=v-1} h^i(s)\ \ \ \ \ c\ \ \ \ \ \prod_{i=0}^{v-1} h^i(t)\ \ \ \ \ \prod_{i=v}^{v+n-1} h^i(t) \\
	&= \prod_0^{i=v+n-1} h^i(s)\ \ \ \ \ c\ \ \ \ \ \prod_{i=0}^{v+n-1} h^i(t) \\
	&= h^{v+n}(c)
\end{align*}
Further, $\depth(l_j) = \max(\depth(l_s),\depth(l_t))+1 = (\rank(st)+1)+1 = \rank(c)+1$, as desired.  Now, we supposed above that neither $s$ nor $t$ was mortal.  They cannot both be mortal, since that would make $\rank(c) = 0$, and we are considering the case $\rank(c) \geq 1$.  So suppose that $s$ is mortal and $t$ is not.  Since $h$ is normalized, $h(s) = \lambda$.  So construct $l_t$ as above and set 
\begin{align*}
l_j = [(S,h^v(c)), (F,l_t)]
\end{align*}
Then following the derivation above, we have for all $n \geq 1$,
\begin{align*}
R(l_j, n) &= h^v(c) \prod_{i=v}^{v+n-1} h^i(t) \\
&= \prod_0^{i=v-1} h^i(s)\ \ \ \ \ c\ \ \ \ \ \prod_{i=0}^{v-1} h^i(t)\ \ \ \ \ \prod_{i=v}^{v+n-1} h^i(t) \\
&= s\ c\prod_{i=0}^{v+n-1} h^i(t) \\
&= h^{v+n}(c)
\end{align*}
and $\depth(l_j) = \depth(l_t)+1 = (\rank(t)+1)+1 = \rank(c)+1$ as desired.  Similarly, if $t$ is mortal and $s$ is not, then $h(t) = \lambda$.  So construct $l_s$ as above and set 
\begin{align*}
l_j = [(B,l_s), (S,h^v(c))]
\end{align*}
Then following the original derivation, we have for all $n \geq 1$,
\begin{align*}
R(l_j, n) &= \prod_v^{i=v+n-1} h^i(s)\ \ \ \ \ h^v(c) \\
&= \prod_v^{i=v+n-1} h^i(s)\ \ \ \ \ \prod_0^{i=v-1} h^i(s)\ \ \ \ \ c\ \ \ \ \ \prod_{i=0}^{v-1} h^i(t) \\
&= \prod_0^{i=v+n-1} h^i(s)\ \ \ \ \ c\ \ \ \ \ t \\
&= h^{v+n}(c)
\end{align*}
and $\depth(l_j) = \depth(l_s)+1 = (\rank(s)+1)+1 = \rank(c)+1$ as desired.

We have now constructed an $l_j$ for each $j$ from 1 to $|x|$ such that $x[j]$ is not mortal.  Since $h$ is normalized, $h(x[j]) = \lambda$ if $x[j]$ is mortal.  So for each $j$ from 1 to $|x|$ such that $x[j]$ is mortal, set $l_j = [\ ]$, the empty list.  Now let $l = \append(l_1, \dotsc, l_{|x|})$.  We have that for all $i \geq 1$, $R(l,i) = h^{v+i}(x)$, as desired.  Further, $l$ has depth
\begin{align*}
&\max \{\depth(l_j) \mid 1 \leq j \leq |x| \text{ and $x_j$ is not mortal} \} \\
&= \max \{\rank(x[j])+1) \mid 1 \leq j \leq |x|\} \\
&= \rank(x)+1
\end{align*} 
as desired.


\end{proof}

\begin{theorem}\label{morphic-zigzag}Every morphic word with growth $\Theta(n^k)$ is a zigzag word of depth $k$.\end{theorem}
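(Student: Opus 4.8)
The plan is to reduce a morphic word with growth $\Theta(n^k)$ to the situation handled by Lemma \ref{lemma-string}. Let $\alpha = \tau(h^\omega(c))$ where $h$ is prolongable on $c$ and $h$ has growth $\Theta(n^k)$ on $c$. First I would pass to a normalized power: by \cite[Lemma 17]{dk2009} there is $t \geq 1$ with $h' = h^t$ normalized. One must check that replacing $h$ by $h'$ does not change the word: since $h$ is prolongable on $c$, so is $h'$, and $h'^\omega(c) = h^\omega(c)$, hence $\tau(h'^\omega(c)) = \alpha$. Growth is preserved up to the substitution $n \mapsto tn$, so $|h'^n(c)|$ is still $\Theta(n^k)$; by the rank result of \cite{er79} this means $\rank(c,h') = k$, so also $\level(c,h') \geq 1$. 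From now on write $h$ for the normalized morphism.

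Next I would unwind the prolongation. Since $h$ is prolongable on $c$, $h(c) = cx$ with $x$ not mortal, and
\begin{align*}
h^\omega(c) = c\, x\, h(x)\, h^2(x)\, h^3(x)\, \dotsm.
\end{align*}
The natural idea is to apply Lemma \ref{lemma-string} to the string $x$: choosing $v = \level(x)$ (note $\level(x) > 0$ since $x$ is not mortal), the lemma gives $l \in L$ of depth $\rank(x)+1$ with $R(l,i) = h^{v+i}(x)$ for all $i \geq 1$. Then
\begin{align*}
h^\omega(c) = c\, x\, h(x)\, \dotsm\, h^{v}(x) \prod_{i \geq 1} R(l,i),
\end{align*}
so setting $q' = \tau(c\, x\, h(x)\, \dotsm\, h^{v}(x))$ and pushing $\tau$ through each $R(l,i)$ (this requires a small observation: $\tau$ applied letterwise to $R(l,i)$ is again of the form $R(\tau(l),i)$ for the list $\tau(l)$ obtained by applying $\tau$ to every string literal in $l$, and one should note $\tau(l)$ still lies in $L$ because codings never produce the empty string, so the $A^+$ constraint is preserved), we obtain $\alpha = q' \prod_{i\geq 1} R(\tau(l),i)$, exhibiting $\alpha$ as a zigzag word.

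The remaining task — and the main obstacle — is the depth bookkeeping: I need $\depth(\tau(l)) = k$, not merely $\rank(x)+1$. The issue is that $\rank(x)$ could a priori be smaller than $k$ (if the high-rank behavior of $c$ came partly from recursion through $c$ itself). To handle this I would use Lemma \ref{lemma-monorec}: from $h(c) = cx$ and $\rank(c) = k \geq 1$ we get $\rank(x) = \rank(c) - 1 = k-1$, so $\depth(l) = k$, and since applying $\tau$ to string literals changes no $F/B/S$ structure, $\depth(\tau(l)) = k$ as well. (The case $k = 0$ is degenerate and excluded here, since polynomial growth $\Theta(n^0)$ would make $h^\omega(c)$ finite, contradicting prolongability; so we may assume $k \geq 1$.) One should also double-check the edge case where $x$ is mortal — but prolongability rules this out by definition — and the case $\level(x)=0$, likewise impossible. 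Assembling these pieces gives the theorem.
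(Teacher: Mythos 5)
Your proposal is correct and follows essentially the same route as the paper's proof: normalize via \cite[Lemma 17]{dk2009}, get $\rank(c,h')=k$ from \cite[Theorem 3]{er79}, use Lemma \ref{lemma-monorec} on $h'(c)=cx$ to obtain $\rank(x)=k-1$, apply Lemma \ref{lemma-string} with $v=\level(x)$, and prepend $q=\tau(h'^{\,v+1}(c))$ while applying $\tau$ to the string literals of $l$. The small points you flag (preservation of depth and of membership in $L$ under $\tau$, and excluding $k=0$, mortal $x$, and $\level(x)=0$ via infiniteness/prolongability) are exactly the checks the paper makes or implicitly relies on.
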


\begin{proof}
Take any morphic word $\alpha$ with growth $\Theta(n^k)$.  There exist a morphism $h$, coding $\tau$, and letter $c$ such that $h$ is prolongable on $c$, $\alpha = \tau(h^\omega(c))$, and $h$ has growth $\Theta(n^k)$ on $c$.

By \cite[Lemma 17]{dk2009}, there is a power $h' = h^t$ with $t \geq 1$ such that $h'$ is normalized.  Because $h$ has growth $\Theta(n^k)$ on $c$ and $h'$ is a power of $h$, $h'$ also has growth $\Theta(n^k)$ on $c$.  Since $h$ is prolongable on $c$, $h'$ is prolongable on $c$.  So $h'(c) = cx$ for some $x \in A^*$.  Then
\begin{align*}
	\alpha = \tau(c\ x\ h'(x)\ h'^{\,2}(x)\ \dotsm)
\end{align*}

Let $v = \level(x,h')$.  Let
\begin{align*}
	q = \tau(h'^{\,v+1}(c)) = \tau(	c		\prod_{i=0}^v		h'^{\,i(x)} )
\end{align*}

By \cite[Theorem 3]{er79}, $\rank(c,h') = k$.  Since $\alpha$ is infinite, we know $k \geq 1$.  Hence by Lemma \ref{lemma-monorec}, $\rank(x,h') = k-1$.  Again since $\alpha$ is infinite, $\level(x,h') > 0$.  Then we can apply Lemma \ref{lemma-string}, obtaining an $l \in L$ of depth $k$ such that for all $i \geq 1$, $R(l,i) = h'^{\,v+i}(x)$.  Let $l'$ be $l$ with every string $s$ replaced with $\tau(s)$.  Then for all $i \geq 1$, $R(l',i) = \tau(h'^{\,v+i}(x))$.  Then
\begin{align*}
	q 		\prod_{i \geq 1}			R(l',i) &= \tau(h'^{\,v+1}(c)) 		\prod_{i \geq 1}			\tau(h'^{\,v+i}(x)) \\
	&= \tau(	c	\prod_{i=0}^v		h'^{\,i}(x) )			 \prod_{i \geq v+1}			\tau(h'^{\,i}(x)) \\
	&= \tau(	c 	\prod_{i \geq 0}			h'^{\,i}(x) ) \\
	&= \alpha
\end{align*}
Therefore $\alpha$ is a zigzag word of depth $k$.
\end{proof}

\subsection{Main result}

Finally, we obtain our main result.

\begin{theorem}\label{theorem-main}An infinite word is morphic with growth $\Theta(n^k)$ iff it is a zigzag word of depth $k$.\end{theorem}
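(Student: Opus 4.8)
The plan is to derive Theorem~\ref{theorem-main} immediately as the conjunction of the two directions already proved. The forward direction --- if an infinite word is a zigzag word of depth $k$, then it is morphic with growth $\Theta(n^k)$ --- is exactly Theorem~\ref{zigzag-morphic}. The backward direction --- if an infinite word is morphic with growth $\Theta(n^k)$, then it is a zigzag word of depth $k$ --- is exactly Theorem~\ref{morphic-zigzag}. So the proof is a one-line appeal to these two results.

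The only point requiring a word of care is the quantifier structure: ``an infinite word is morphic with growth $\Theta(n^k)$ iff it is a zigzag word of depth $k$'' should be read for a fixed $k$, and both Theorem~\ref{zigzag-morphic} and Theorem~\ref{morphic-zigzag} are stated with $k$ free, so they apply verbatim. There is no hidden obstacle here, since all the work --- the inductive constructions in Lemma~\ref{lemma1} and Lemma~\ref{lemma-string}, the rank/level machinery, and the appeal to normalized morphisms from \cite{dk2009} --- has already been carried out in the preceding two theorems.

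I would therefore write:

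\begin{proof}
Immediate from Theorem \ref{zigzag-morphic} and Theorem \ref{morphic-zigzag}.
\end{proof}

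If any elaboration were wanted, I would spell out both implications explicitly: given a zigzag word $\alpha$ of depth $k$, Theorem~\ref{zigzag-morphic} yields that $\alpha$ is morphic with growth $\Theta(n^k)$; conversely, given a morphic word $\alpha$ with growth $\Theta(n^k)$, Theorem~\ref{morphic-zigzag} yields that $\alpha$ is a zigzag word of depth $k$. Combining these two gives the stated equivalence.
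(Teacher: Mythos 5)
Your proposal matches the paper's proof exactly: the paper also proves Theorem~\ref{theorem-main} with a one-line appeal, ``Immediate from Theorems \ref{zigzag-morphic} and \ref{morphic-zigzag}.'' Your remarks on the quantifier structure and the optional elaboration are fine but not needed.
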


\begin{proof}
Immediate from Theorems \ref{zigzag-morphic} and \ref{morphic-zigzag}.
\end{proof}

\section{Applications}\label{sec:applications}

In this section we apply the equivalence between zigzag words and morphic words with polynomial growth obtained in the previous section to the first two orders of growth.  We show an exact correspondence between morphic words with growth $O(n)$ and ultimately periodic words, and between morphic words with growth $O(n^2)$ and multilinear words.  As far as we are aware, these results have not previously appeared in the literature.

\begin{theorem}\label{ultper}An infinite word is morphic with growth $O(n)$ iff it is ultimately periodic.\end{theorem}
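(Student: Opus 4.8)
The plan is to obtain this as a near-immediate corollary of the main result (Theorem \ref{theorem-main}), using Proposition \ref{prop1} to upgrade ``growth $O(n)$'' to ``growth $\Theta(n^k)$ for some $k$''. First I would handle the degenerate cases. If $\alpha$ is morphic with growth $O(n)$, then its growth function $f(n) = |h^n(c)|$ is polynomially bounded and $c$ is not mortal (since $\alpha$ is infinite), so by Proposition \ref{prop1} we have $f(n) \in \Theta(n^k)$ for some $k \geq 0$; since $f$ is $O(n)$, necessarily $k \in \{0,1\}$. If $k = 0$ then $|h^n(c)|$ is eventually constant; I would argue directly (or cite the $k=1$ case after padding) that $\alpha$ is then ultimately periodic --- in fact this is subsumed by the $k=1$ analysis below. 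So assume $k = 1$.

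For $k = 1$, by Theorem \ref{theorem-main}, $\alpha$ is a zigzag word of depth $1$: $\alpha = q \prod_{i \geq 1} R(l,i)$ for some $q \in A^*$ and $l \in L$ with $\depth(l) = 1$. By definition of depth, every pair $(f_i,x_i)$ in $l = [(f_1,x_1),\dotsc,(f_m,x_m)]$ has $f_i = S$, so $R(l,i) = x_1 x_2 \dotsm x_m$ for every $i \geq 1$, independent of $i$. Writing $r = x_1 \dotsm x_m \in A^+$, we get $\alpha = q\, r\, r\, r \dotsm = q r^\omega$, which is ultimately periodic (note $r \neq \lambda$). For the $k=0$ case one checks that a depth-$0$ situation doesn't arise --- $L$ contains only lists of depth $\geq 1$ --- so in fact if $\alpha$ has growth $O(n)$ and is infinite it already has a depth-$1$ zigzag representation after passing through Proposition \ref{prop1}, and the argument above applies verbatim. (Concretely: growth $\Theta(1)$ still yields $k=1$ is impossible only if the word is finite; since $\alpha$ is infinite, $k \geq 1$, and $O(n)$ forces $k=1$.)

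Conversely, suppose $\alpha$ is ultimately periodic, say $\alpha = q y^\omega$ with $y \neq \lambda$. Then $l = [(S,y)]$ is a list in $L$ of depth $1$ with $R(l,i) = y$ for all $i$, so $\alpha = q \prod_{i \geq 1} R(l,i)$ is a zigzag word of depth $1$. By Theorem \ref{theorem-main}, $\alpha$ is morphic with growth $\Theta(n)$, which is in particular $O(n)$.

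The only real subtlety --- and the step I would be most careful about --- is the bookkeeping at the boundary between $k=0$ and $k=1$: making sure that ``growth $O(n)$'' combined with ``$\alpha$ infinite'' lands us squarely at $k=1$ (so that Theorem \ref{theorem-main} gives a genuine depth-$1$ zigzag representation), rather than at some ill-defined ``depth $0$''. Everything else is a direct unwinding of the definitions of $R$, $\depth$, and ultimate periodicity.
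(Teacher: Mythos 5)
Your proposal is correct and follows essentially the same route as the paper: use Proposition \ref{prop1} together with the infiniteness of $\alpha$ to pin the growth at $\Theta(n)$, invoke Theorem \ref{theorem-main} to get a depth-$1$ zigzag representation, observe that depth $1$ forces every term to be an $S$-term so that $\alpha = q(x_1\dotsm x_m)^\omega$, and conversely encode $qy^\omega$ as the depth-$1$ list $[(S,y)]$ and apply Theorem \ref{theorem-main} again. The extra discussion of a ``$k=0$'' case is harmless; the paper dismisses it with the same observation you reach, namely that an infinite morphic word cannot have growth $\Theta(1)$.
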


\begin{proof}
\noindent $\Longrightarrow$: Take any morphic word $\alpha$ with growth $O(n)$.  By Proposition \ref{prop1}, $\alpha$ has growth $\Theta(n)$.  (Since $\alpha$ is infinite, it cannot have growth $\Theta(1)$.)  Then by Theorem \ref{theorem-main}, $\alpha$ is a zigzag word of depth 1.  Then there are $q \in A^*$ and $l \in L$ such that
\begin{align*}
	\alpha = 	q 		\prod_{i \geq 1}			R(l,i)
\end{align*}
and $l$ has depth 1.  Then $l$ has the form $[(S,r_1), \dotsc, (S,r_m)]$.  It follows that $\alpha = q (r_1 \dotsm r_m)^\omega$.  Thus $\alpha$ is ultimately periodic.

\bigskip

\noindent $\Longleftarrow$: Take any ultimately periodic word $\alpha = qr^\omega$.  Let $l = [(S,r)]$.  Then 
\begin{align*}
	\alpha = 	q 		\prod_{i \geq 1}			R(l,i)
\end{align*}
So $\alpha$ is a zigzag word of depth 1.  Then by Theorem \ref{theorem-main}, $\alpha$ is a morphic word with growth $O(n)$.
\end{proof}

\begin{theorem}An infinite word is morphic with growth $O(n^2)$ iff it is multilinear.\end{theorem}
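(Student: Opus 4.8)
The plan is to follow the template of Theorem \ref{ultper}: use Theorem \ref{theorem-main} to convert the growth condition into a bound on zigzag depth, and then show directly that the zigzag words of depth at most $2$ are exactly the multilinear words.

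For the forward implication I would take $\alpha$ morphic with growth $O(n^2)$ and apply Proposition \ref{prop1}, noting that an infinite word cannot have growth $\Theta(1)$, to get that $\alpha$ has growth $\Theta(n^k)$ for some $k \in \{1,2\}$. Theorem \ref{theorem-main} then yields $\alpha = q \prod_{i \ge 1} R(l,i)$ with $q \in A^*$ and $l = [(f_1,x_1),\dotsc,(f_m,x_m)] \in L$ of depth $k \le 2$. The key observation is that whenever $f_j \in \{F,B\}$ the sublist $x_j$ satisfies $\depth(x_j) + 1 \le k \le 2$, hence $\depth(x_j) = 1$, so $x_j$ consists only of $S$-entries and $R(x_j,i)$ equals a fixed nonempty string $s_j$ for every $i$; consequently $f_j(x_j,i) = s_j^i$. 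When $f_j = S$, $f_j(x_j,i) = x_j$ is a fixed nonempty string. Thus each $R(l,i)$ is a concatenation of $m$ blocks, the $j$th being either a constant nonempty string or $s_j^i$. Re-indexing the outer product via $i = n+1$ then gives $\alpha = q \prod_{n \ge 0} \prod_{j=1}^m r_j^{a_j n + b_j}$, where $r_j = x_j$ and $(a_j,b_j) = (0,1)$ if $f_j = S$, and $r_j = s_j$ and $(a_j,b_j) = (1,1)$ if $f_j \in \{F,B\}$; all $r_j$ are nonempty and $a_j + b_j > 0$, so $\alpha$ is multilinear.

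For the reverse implication, given a multilinear $\alpha = q \prod_{n \ge 0} r_1^{a_1 n + b_1} \dotsm r_m^{a_m n + b_m}$, I would first absorb the $n = 0$ factor into the prefix, writing $\alpha = q' \prod_{n \ge 1} \prod_{j=1}^m r_j^{a_j n + b_j}$ with $q' = q\, r_1^{b_1} \dotsm r_m^{b_m} \in A^*$. I would then build $l \in L$ by scanning $j = 1,\dotsc,m$ in order and, for each $j$, appending the entry $(S, r_j^{b_j})$ when $b_j > 0$ and then the entry $(F,[(S, r_j^{a_j})])$ when $a_j > 0$. Since $a_j + b_j > 0$ and $r_j \ne \lambda$, at least one entry is appended for each $j$ and every string placed into $l$ is nonempty, so $l$ is a legal element of $L$, with $\depth(l) \le 2$ (equal to $2$ if some $a_j > 0$, and to $1$ otherwise). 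A routine computation shows that for all $n \ge 1$ the entries coming from $j$ contribute $r_j^{b_j}$ and $\prod_{i=1}^n r_j^{a_j} = r_j^{a_j n}$, so $R(l,n) = \prod_{j=1}^m r_j^{a_j n + b_j}$ and hence $\alpha = q' \prod_{n \ge 1} R(l,n)$. Thus $\alpha$ is a zigzag word of depth $k = \depth(l) \le 2$, and Theorem \ref{theorem-main} makes it morphic with growth $\Theta(n^k)$, which is $O(n^2)$.

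I expect the reverse implication to be the main obstacle: one must realize an arbitrary affine exponent $a_j n + b_j$ using only the $S$ and $F$ mechanisms, which forces the split into a constant part (an $S$-entry carrying $r_j^{b_j}$) and a purely linear part (a depth-$2$ $F$-entry over the single constant sublist $[(S, r_j^{a_j})]$), while carefully discarding the degenerate factors with a zero coefficient so that $l$ remains a legal list of nonempty strings. The forward implication is comparatively routine once one notices that a sublist of depth at most $1$ always evaluates under $R$ to a string independent of its second argument.
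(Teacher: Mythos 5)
Your proof is correct, and while the forward direction matches the paper's argument in substance (a depth-$\le 2$ list has every $F$/$B$ subterm of depth exactly $1$, so each such subterm evaluates to a fixed string $s_j$ and contributes $s_j^n$, while $S$-terms contribute constants), your reverse direction takes a genuinely different and more self-contained route. The paper first splits off the ultimately periodic case via Theorem \ref{ultper} and then invokes an external normal-form result (\cite[Theorem 7]{smith2016b}) to rewrite a properly multilinear word as $q\prod_{n\geq 1}\prod_i p_i s_i^n$ before reading off the zigzag list $[(S,p_1),(F,[(S,s_1)]),\dotsc]$. You instead work directly from the definition: you absorb the $n=0$ factor into the prefix and realize each affine exponent $a_jn+b_j$ by splitting it into a constant $S$-entry carrying $r_j^{b_j}$ (when $b_j>0$) and a linear $F$-entry $(F,[(S,r_j^{a_j})])$ (when $a_j>0$), checking that at least one entry survives for each $j$ since $a_j+b_j>0$ and $r_j\neq\lambda$, so the list is legal and has depth at most $2$. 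This buys independence from the cited normalization theorem and handles the ultimately periodic case uniformly (it simply comes out as depth $1$); the paper's route buys brevity by outsourcing exactly the bookkeeping you do by hand. Your version is also slightly more careful about indexing, reconciling the $\prod_{n\geq 0}$ in the definition of multilinear words with the $\prod_{i\geq 1}$ in the definition of zigzag words via the substitution $i=n+1$, a point the paper glosses over.
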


\begin{proof}
\noindent $\Longrightarrow$: Take any morphic word $\alpha$ with growth $O(n^2)$.  By Proposition \ref{prop1}, $\alpha$ has growth $\Theta(n)$ or $\Theta(n^2)$.  If $\alpha$ has growth $\Theta(n)$, then by Theorem \ref{ultper}, it is ultimately periodic and hence multilinear.  So say $\alpha$ has growth $\Theta(n^2)$.  Then by Theorem \ref{theorem-main}, $\alpha$ is a zigzag word of depth 2.  Then there are $q \in A^*$ and $l \in L$ such that
\begin{align*}
\alpha = 	q 		\prod_{i \geq 1}			R(l,i)
\end{align*}
and $l$ has depth 2.  Now, $l$ has the form $[(f_1,x_1), \dotsc, (f_m,x_m)]$.  For each $1 \leq i \leq m$, we create a term $t_i = [r_i,a_i,b_i]$ as follows.  If $f_i = S$, then $x_i$ is a string, so let $t_i = [x_i, 0, 1]$.  If $f_i = F$ or $B$, then $x_i$ has depth 1, and therefore has the form $[(S,s_1),\dotsc,(S,s_k)]$.  So let $t_i = [s_1 \dotsm s_k, 1, 0]$.  Now the multilinear word $[q, [t_1, \dotsc, t_m]]$ equals
\begin{align*}
	q 	\prod_{n \geq 1}		\prod_{i=1}^m			r_i^{n \cdot a_i + b_i} &= q 	\prod_{n \geq 1}	\prod_{i=1}^m			f_i(x_i,n) \\
	&= q 	\prod_{n \geq 1}		R(l,i) \\
	&= \alpha
\end{align*}
as desired.

\bigskip

\noindent $\Longleftarrow$: Take any multilinear word $\alpha$.  If $\alpha$ is ultimately periodic, then by Theorem \ref{ultper}, $\alpha$ is a morphic word with growth $O(n)$.  Otherwise, $\alpha$ is properly multilinear, so by \cite[Theorem 7]{smith2016b}, we can write $\alpha$ as
\begin{align*}
	q 	\prod_{n \geq 1}		\prod_{i=1}^m			p_i s_i^n
\end{align*}
for some $m \geq 1$, $q \in A^*$, and $p_i, s_i \in A^+$.  Let
\begin{align*}
l = [(S,p_1),(F,[(S,s_1)]),\ \dotsc\ ,(S,p_m),(F,[(S,s_m)])].
\end{align*}
Then $l$ has depth 2 and
\begin{align*}
	\alpha	= q 	\prod_{n \geq 1}			R(l,i)
\end{align*}
So $\alpha$ is a zigzag word of depth 2.  Then by Theorem \ref{theorem-main}, $\alpha$ is a morphic word with growth $O(n^2)$.
\end{proof}

\section{Conclusion}\label{sec:conclusion}

In this paper we characterized morphic words with polynomial growth in terms of zigzag words, showing that an infinite word is morphic with growth $\Theta(n^k)$ iff it is a zigzag word of depth $k$.  We then applied this characterization to show that the morphic words with growth $O(n)$ are exactly the ultimately periodic words, and the morphic words with growth $O(n^2)$ are exactly the multilinear words.

Some open problems involving the above characterization arise in connection with automata.  We say that an automaton $M$ determines an infinite word $\alpha$ if $L(M)$ is infinite and every string in $L(M)$ is a prefix of $\alpha$.  In \cite{smith3} it is shown that ultimately periodic words are exactly those determined by finite automata, and multilinear words are exactly those determined by one-way stack automata (a generalization of pushdown automata).  It would be interesting to know what kind of automaton determines exactly the zigzag words.

It is further shown in \cite{smith3} that every multilinear word can be determined by a one-way 2-head DFA.  (These automata can also determine infinite words that are not multilinear.)  It would be interesting to know whether or not the following statement holds: every zigzag word of depth $k$ (morphic word with $\Theta(n^k)$ growth) can be determined by a one-way $k$-head DFA.

\acknowledgements

I would like to thank Jeffrey Shallit for suggesting the connection between multilinear words and morphic words with quadratic growth, and for his helpful comments on a draft of this paper.

\bibliographystyle{alpha}
\bibliography{../sources}

\end{document}